\documentclass[sigconf]{acmart}
\AtBeginDocument{%
  }

\usepackage{enumitem}
\usepackage{multirow}
\usepackage{booktabs}
\usepackage{graphicx}
\usepackage{threeparttable}
\usepackage{makecell}
\usepackage{subcaption}
\usepackage{tabularx}
\usepackage{tabularray}
\UseTblrLibrary{booktabs}
\usepackage{hyperref}
\usepackage{algorithm}
\usepackage{algpseudocode}
\usepackage{balance} 
\makeatletter
\newcommand{\AlgInput}[1]{\Statex \hspace*{-\algorithmicindent}\textbf{Input:} #1}
\newcommand{\AlgI}[1]{\Statex \hspace*{-\algorithmicindent}{}#1}
\newcommand{\AlgOutput}[1]{\Statex \hspace*{-\algorithmicindent}\textbf{Output:} #1}
\makeatother
\usepackage{float}

\usepackage{amsmath, amsthm, amssymb}

\allowdisplaybreaks[2]  
\newtheorem{theorem}{\textbf{Theorem}}
\usepackage[labelformat=parens]{subcaption}  
\usepackage{xcolor}
\usepackage{colortbl}

\definecolor{orange1}{RGB}{255,245,235} 
\definecolor{orange2}{RGB}{255,230,200}
\definecolor{orange3}{RGB}{255,210,160}
\definecolor{orange4}{RGB}{255,180,120}
\definecolor{orange5}{RGB}{255,140,100} 
\definecolor{blue1}{RGB}{235,245,255}
\definecolor{blue2}{RGB}{200,230,255}
\definecolor{blue3}{RGB}{160,210,255}
\definecolor{blue4}{RGB}{120,180,255}
\definecolor{blue5}{RGB}{70,180,255}

\copyrightyear{2026}
\acmYear{2026}
\setcopyright{cc}
\setcctype{by-nc-nd}
\acmConference[KDD '26]{Proceedings of the 32nd ACM SIGKDD Conference on Knowledge Discovery and Data Mining V.2}{August 09--13, 2026}{Jeju Island, Republic of Korea}
\acmBooktitle{Proceedings of the 32nd ACM SIGKDD Conference on Knowledge Discovery and Data Mining V.2 (KDD '26), August 09--13, 2026, Jeju Island, Republic of Korea}
\acmDOI{10.1145/3770855.3817704}
\acmISBN{979-8-4007-2259-2/2026/08}

\author{Yajie Yu}
\affiliation{%
  \department{School of Computer Science and Information Security}
  \institution{Guilin University of Electronic Technology}
  \city{Guilin}
  \state{Guangxi}
  \country{China}}
\email{yyj@mails.guet.edu.cn}

\author{Chenzhong Bin}
\authornote{Corresponding author}
\affiliation{%
  \department{School of Computer Science and Information Security}
  \institution{Guilin University of Electronic Technology}
  \city{Guilin}
  \state{Guangxi}
  \country{China}}
\email{binchenzhong@guet.edu.cn}

\author{Zhoubo Xu}
\affiliation{%
  \department{School of Computer Science and Information Security}
  \institution{Guilin University of Electronic Technology}
  \city{Guilin}
  \state{Guangxi}
  \country{China}}
\email{xzbli_11@guet.edu.cn}

\author{Zhixin Zeng}
\affiliation{%
  \department{School of Computer Science and Information Security}
  \institution{Guilin University of Electronic Technology}
  \city{Guilin}
  \state{Guangxi}
  \country{China}}
\email{zxzeng@guet.edu.cn}

\author{Tongxin Xu}
\affiliation{%
  \department{School of Computer Science and Information Security}
  \institution{Guilin University of Electronic Technology}
  \city{Guilin}
  \state{Guangxi}
  \country{China}}
\email{xutoncy@gmail.com}

\author{Cihan Xiao}
\affiliation{%
  \department{Electrical and Computer Engineering}
  \institution{Johns Hopkins University}
  \city{Baltimore}
  \state{MD}
  \country{United States}}
\email{cxiao7@jhu.edu}

\author{Jiafeng Wu}
\affiliation{%
  \department{School of Computer Science and Information Security}
  \institution{Guilin University of Electronic Technology}
  \city{Guilin}
  \state{Guangxi}
  \country{China}}
\email{wujiafeng@mails.guet.edu.cn}

\renewcommand{\shortauthors}{Yajie Yu et al.}

\graphicspath{{./figs/}}

\begin{document}

\title{Beyond Instance-Level Alignment and Uniformity:\\ Semantic Factor Learning for Collaborative Filtering}

\begin{abstract}
Collaborative filtering (CF) is widely used in recommender systems (RecSys) due to its simplicity and efficiency. However, existing CF methods follow an instance-level learning paradigm. During the instance learning stage, a large number of uninteracted user-item instances, of which items are potential interested by the user, are incorrectly treated as true negative samples resulting in a severe limitation to the  generalization and scalability of models. Moreover, mainstream graph convolutional networks (GCNs) inherently suffer from high computational cost and over-smoothing issues, which limit the ability in capturing higher-order connectivity and lead to a poor generalization under sparse supervision signals. To address the above limitations, we propose \textbf{\underline{S}}em\textbf{\underline{a}}ntic \textbf{\underline{F}}actor \textbf{\underline{e}}nhanced \textbf{\underline{A}}lignment and \textbf{\underline{U}}niformity (\textbf{SaFeAU}), a novel framework that augments interacted instances with semantic factors, thereby mitigating false negative labeling and enabling matrix factorization (MF) to capture high-order CF signals without graph neighborhood aggregation. Specifically, SaFeAU consists of three tightly coupled components. First, Semantic Factor Routing (SFR) disentangles item representations into independent and global semantic factors. Building on these factors, Semantic Factor Matching (SFM) identifies uninteracted items, which share the same semantic factors with interacted ones, as potential positive pairs for enriching sparse supervision signals. Finally, Semantic Pairs Alignment (SPA) aligns both observed and potential positive pairs while promoting uniformity of user and item representations. Extensive experiments on four sparse real-world datasets show that SaFeAU consistently outperforms GCN-based and MF-based state-of-the-art CF methods in both recommendation accuracy and computational efficiency, confirming the effectiveness of the proposed semantic enhanced learning paradigm. The source code is available \footnote{\url{https://github.com/MysticRealm/SaFeAU}}.

\end{abstract}

\begin{CCSXML}
<ccs2012>
   <concept>
       <concept_id>10002951.10003317.10003347.10003350</concept_id>
       <concept_desc>Information systems~Recommender systems</concept_desc>
       <concept_significance>500</concept_significance>
       </concept>
 </ccs2012>
\end{CCSXML}

\ccsdesc[500]{Information systems~Recommender systems}

\keywords{Recommender Systems, Collaborative Filtering, Semantic Factor Learning, Semantic Pairs Alignment}

\maketitle

\section{Introduction}
Recommender systems (RecSys) play a vital role in modern web services by helping users manage excessive information and discover personalized content such as products, videos, and other relevant items. Owing to its simplicity and effectiveness, collaborative filtering (CF) has become a widely adopted technique in RecSys for helping users discover potentially relevant items \cite{LightGCN,NGCF}. When only implicit feedback is available, prior studies \cite{SimpleX,DirectAU,effectiveness,incorporating,MAWU} emphasize that the effectiveness of CF models hinges on how supervision signals, e.g., user-item interactions, are accurately constructed to reflect true user preferences. CF methods can be broadly categorized into matrix factorization (MF) and graph convolutional networks (GCNs).

However, both types of CF methods \cite{BPR,LightGCN,DirectAU,LightGODE,LightCCF,FourierKAN-GCF,KDD26} naturally belong an \textbf{instance-level} learning paradigm, which aims at modeling CF signals by aligning interacted user-item instances, i.e. observed positive pairs, while pushing apart uninteracted user-item instances, i.e., unobserved negative pairs. Obviously, for a specific user, these negative pairs could contain a large number of potential positive pairs, a.k.a. false negative items, with which the user uninteracted but probably would be interested in. Indiscriminately pushing apart all of unobserved negative pairs will sacrifice the generalization of CF methods under large scale recommending scenarios. 

\vspace{-0.5em}
\begin{figure}[ht]
    \centering
    \includegraphics[width=1.02\linewidth]{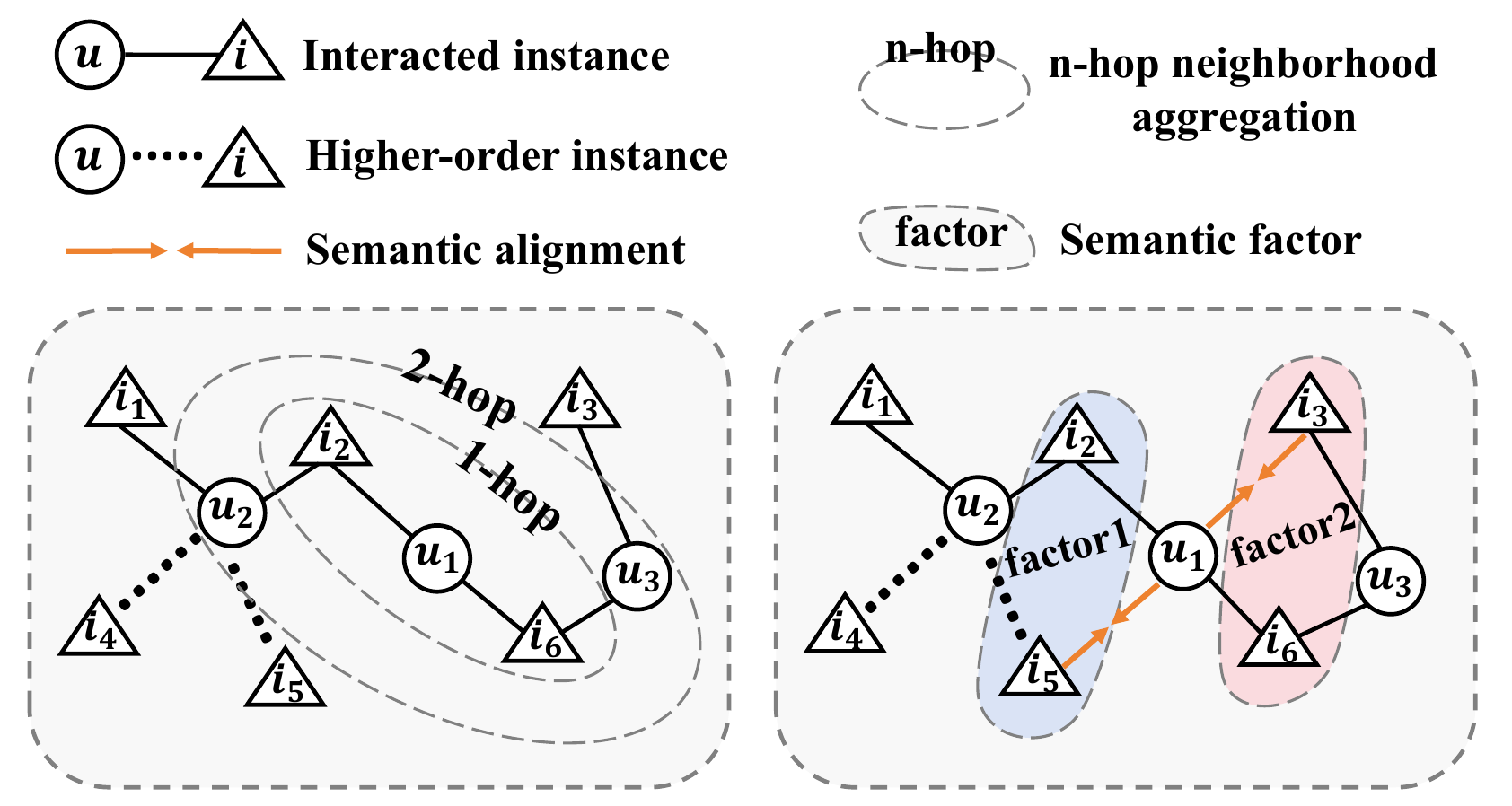} \\
    \vspace{-1.5em}
    
    \subcaptionbox{\label{fig1:a}}[0.48\linewidth]{\centering\hspace{0.5\linewidth}}%
    \hfill%
    \subcaptionbox{\label{fig1:b}}[0.48\linewidth]{\raggedright\hspace{0.5\linewidth}}
    \vspace{-0.6em}
    \caption{A comparison between (a) GCN-based alignment and (b) semantic positive pairs alignment on the same interaction graph. \label{fig:1}}
\end{figure}
\vspace{-0.5em}

Moreover, most GCN-based methods \cite{NGCF, LightGCN, GraphAU, KUPN, KMCLR, PKEF} harness neighborhood aggregation mechansim to learn user/item representations in interaction garphs for recommendations, inherently suffering from the high computational cost and the over-smoothing issue \cite{oversmoothing,over-smoothing}. Meanwhile, they are merely trained by interacted instances, thereby limiting the ability of capturing higher-order connectivity \cite{UltraGCN,SGL}, leading to incomplete user representations and poor generalization under sparse supervision signals \cite{data-sparsity}. For instance, as illustrated in Figure \ref{fig1:a}, user $u_1$ only directly interacted with items $i_2$ and $i_6$. As the multi-hop neighborhood aggregation proceeding, several irrelevant nodes are introduced as noise, unexpectedly contaminating the representation of user $u_1$. Meanwhile, higher-order neighbors such as $i_4$ and $i_5$ fail to establish effective connections with $u_1$ due to the multi-hop neighborhood relationships. Consequently, under sparse supervision signals, GCN-based methods struggle to accurately model fine-grained user preferences.

Naturally, the key to addressing the above two limitations lies in: \textit{How to identify false negative items based on sparse supervision signals? And how to achieve efficient semantic-level signal alignment to bypass the noisy neighborhood aggregation?}

In this paper, we propose \textbf{\underline{S}}em\textbf{\underline{a}}ntic \textbf{\underline{F}}actor \textbf{\underline{e}}nhanced \textbf{\underline{A}}lignment and \textbf{\underline{U}}niformity (\textbf{SaFeAU}), a novel framework that identifies false negative items and aligns user-item representations at the semantic-level through global semantic factor learning. The goal of our paper is to learn potential item attributes as a set of distinct semantic factors, related to, e.g., brand, material, texture, and appearance, based on the entire item set. Then, we identify false negative items among unobserved items which share the similar semantic factors with the user’s interacted items. As shown in Figure \ref{fig1:b}, after calculating the relationship between items and their corresponding semantic factors, items $i_2$ and $i_5$ are assigned to semantic factor1, while items $i_3$ and $i_6$ belong to semantic factor2. Despite merely interacted with $i_2$ and $i_6$, user $u_1$ is expected to align with potential positive pairs $i_3$ and $i_5$ as sharing the same semantic factors. This semantic positive pairs alignment overcomes the limitation of the multi-hop aggregation of GCNs, enabling user-item alignment in the semantic space via potential positive pair matching, thereby capturing a more complete and accurate representation of user preferences.

To achieve the above goal, specifically, SaFeAU consists of three tightly coupled components. First, a Semantic Factor Routing (SFR) algorithm is employed to learn a set of shared semantic factors through an iterative routing process. By iteratively updating semantic assignment weights, the model refines how items correlate with these factors, thereby encouraging disentanglement of multifaceted item attributes. Semantic Factor Matching (SFM) is then introduced to augment the sparse supervision signals of CF from a semantic perspective. Specifically, for each item interacted by a given user, we select its top-$k$ semantic factors based on semantic assignment weights and construct potential positive pairs by correlating the user with unobserved items that share any of these factors, thereby expanding the supervision signals based on observed positive pairs. As illustrated in Figure \ref{fig1:b}, user $u_1$ is inferred to have a semantic correlation with a high-order neighbor item $i_5$, since $i_5$ shares the same factor1 with the interacted item $i_2$. Finally, Semantic Pairs Alignment (SPA) leverages the alignment and uniformity principle to align users with both their interacted items and also with SFM-identified potential positive items, while promoting uniformity across all user and item representations for enhancing their disparity. Benefiting from the above components, SaFeAU elaborately expands the supervision signals of CF, improving the model generalization and training efficiency in the sparse and large-scale recommendation scenarios. The main contributions of this paper can be summarized as follows: 

\begin{itemize}[leftmargin=*]
\item To the best of our knowledge, this is the first work to explore false negative items from a semantic factor perspective. We propose a framework SaFeAU to introduce a novel semantic enhanced learning paradigm for CF, which augments interacted instances with semantic factors, thereby mitigating false negative labeling and enabling MF to capture high-order CF signals avoiding graph neighborhood aggregation.

\item We propose the Semantic Factor Routing (SFR) algorithm to learn shared semantic factors from item representations via semantic assignment weights updates, promoting disentanglement of multifaceted item attributes. Building upon these semantic factors, we further introduce the Semantic Factor Matching (SFM) method to enrich sparse supervision signals in CF by constructing potential positive pairs through matching each user with unobserved items that share semantic factors with their interacted items.

\item Extensive experiments on four sparse real-world datasets solidly demonstrate that SaFeAU consistently outperforms GCN-based and MF-based state-of-the-art CF methods in both recommendation accuracy and computational efficiency, validating the effectiveness of the semantic enhanced learning paradigm.
\end{itemize}

\section{Preliminaries}

In this section, we first formalize the CF problem, then theoretically analyze the relationships among three mainstream CF loss functions, and finally theoretically justify the advantages of the alignment and uniformity (AU) optimizations.

\subsection{Problem Definition}
Let $\mathcal{U}$ and $\mathcal{I}$ denote the sets of users and items, respectively. The data distribution in CF is characterized by the marginal distributions $p_{\mathrm{user}}(u)$ and $p_{\mathrm{item}}(i)$ over $\mathcal{U}$ and $\mathcal{I}$, as well as the joint distribution $p_{\mathrm{pos}}(u, i)$ of observed positive user--item interactions.

Most CF methods employ an encoder network $f(\cdot)$ that maps each user and item into a $d$-dimensional embedding space $\mathbb{R}^d$. $\tilde{f(\cdot)}$ means normalized representations. Then, the predicted score is usually defined as dot product $s(u,i) = \tilde{f(u)}^\top \tilde{f(i)}$. For simplicity, the Euclidean distance between user $u$ and item $i$ is denoted as $d(u,i) = ||\tilde{f(u)}-\tilde{f(i)}||^2$. The objective of CF is to utilize the interactive instances to predict preference scores for unobserved user-item pairs and rank items to generate recommendations.

\begin{figure*}[htbp] \centering \includegraphics[width=\textwidth]{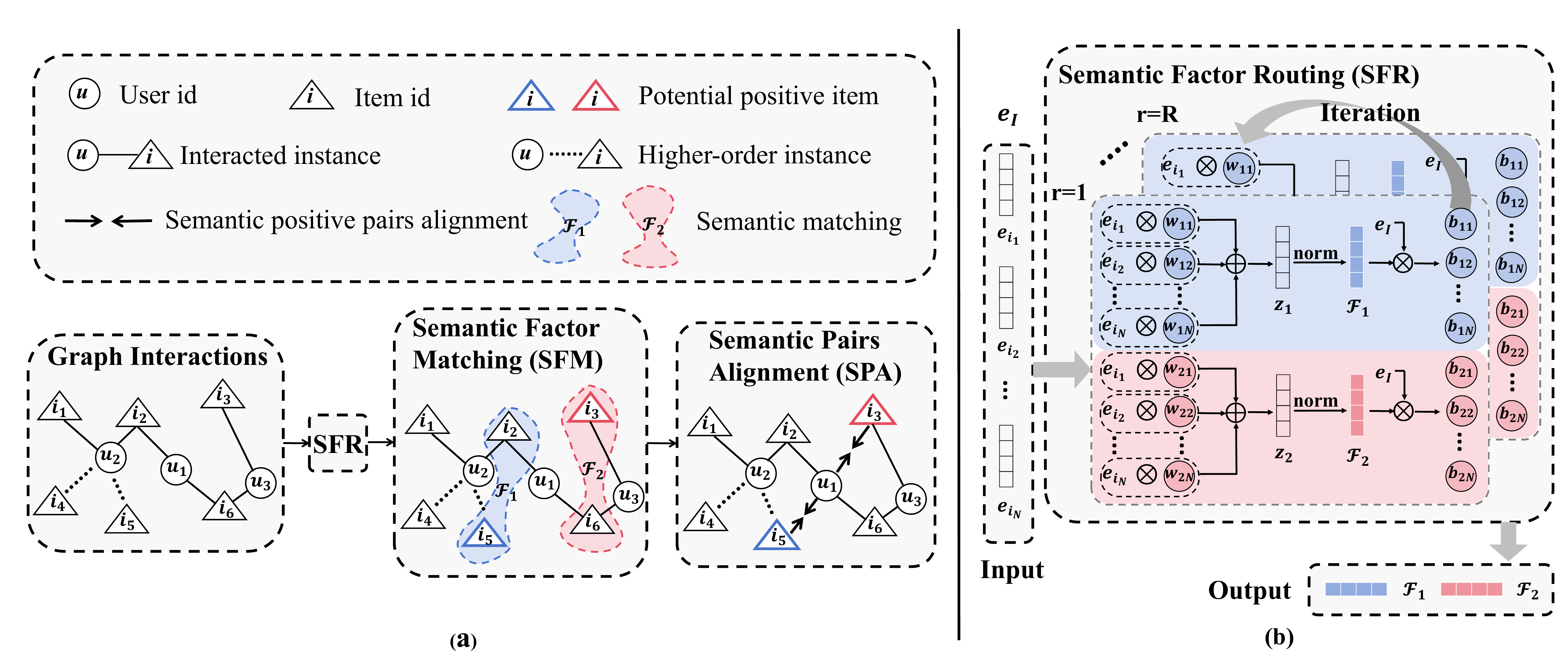} \caption{The architecture of SaFeAU. (a) Taking user $u_1$ as an example, we show how to expand its potential positive items. (b) Illustration of the Semantic Factor Routing (SFR) algorithm, which performs $r$ iterations to route all item embeddings to semantic factors based on the semantic assignment weights, resulting in disentangled factor embeddings. } \label{fig} \end{figure*}

\subsection{BPR and InfoNCE for CF}
The BPR objective \cite{BPR} discriminates the similarity of the positive pair $(u,i)$ from that of the negative pair $(u,i^-)$:

\begin{equation}
\begin{aligned}
L_{BPR} = - \underset{(u,i) \sim p_{\text{pos}}}{\mathbb{E}} \log \left[ \text{sigmoid} \left( s(u,i) - s(u,i^-) \right) \right].
\label{1}
\end{aligned}
\end{equation}

Recently, the study of LightCCF \cite{LightCCF} has demonstrated that the InfoNCE \cite{InfoNCE} loss can effectively model user-item interactions, which incorporates a set of $M$ negative pairs $\{j_n\}_{n=1}^{M}$: 

\begin{equation}
\begin{aligned}
L_{\text{CL}}(f; \tau, M) = - \underset{\substack{(u,i) \sim p_{\text{pos}} \\ \{j_n\}_{n=1}^{M} \overset{\text{i.i.d.}}{\sim} p_{\text{item}}}}{\mathbb{E}} \log \frac{e^{\text{s}(u,i)/\tau}}{e^{\text{s}(u,i)/\tau} + \sum_{n=1}^{M} e^{\text{s}(u,j_n)/\tau}},
\label{2}
\end{aligned}
\end{equation}
where $\tau$ is the temperature hyperparameter. When $M=1$ and $\tau=1$, the InfoNCE objective becomes mathematically equivalent to BPR. 

Indeed, prior work has shown that a larger $M$ provides broader coverage of the item distribution, which yields a more informative contrastive signal \cite{InfoNCE}. In Appendix \ref{M}, we theoretically prove that InfoNCE outperforms BPR due to its enhanced negative sampling coverage, which leads to a more accurate approximation of the underlying data distribution.

\subsection{Alignment and Uniformity Optimizations Bypass Negative Sampling Limits}
\label{bypassing}
DirectAU \cite{DirectAU} builds upon the insight drew from \cite{AU} that representation quality is governed by two key properties: alignment and uniformity (AU). Alignment is straightforwardly defined to enhance the proximity between normalized embeddings of positive pairs under $p_{\mathrm{pos}}(u, i)$:

\begin{equation}
\begin{aligned}
L_{\text{align}} = \underset{(u,i) \sim p_{\text{pos}}}{\mathbb{E}} \left\| \tilde{f(u)} - \tilde{f(i)} \right\|^2.
\label{align}
\end{aligned}
\end{equation}

While the uniformity loss is defined as the logarithm of the average pairwise Gaussian potential:
\begin{equation}
\begin{aligned}
L_{\text{uniform}} = \frac{1}{2} \log \underset{(u,u') \overset{\text{i.i.d.}}{\sim} p_{\text{user}}}{\mathbb{E}}  e^{-2 \left\| \tilde{f(u)} - \tilde{f(u')} \right\|^2} +\\
\frac{1}{2} \log \underset{(i,i') \overset{\text{i.i.d.}}{\sim} p_{\text{item}}}{\mathbb{E}}  e^{-2 \left\| \tilde{f(i)} - \tilde{f(i')} \right\|^2},
\label{uniform}
\end{aligned}
\end{equation}
where $(u,u') \sim p_{\text{user}}$ and $(i, i') \sim p_{\text{item}}$ are pairs of users and items sampled independently and identically from their respective distributions, i.e., i.i.d. These two losses directly reflect the fundamental principle of contrastive representation learning: encouraging learned feature representation for positive pairs to be similar, while pushing the randomly sampled negative pairs apart \cite{AU}.

We analyze the relationship between AU and InfoNCE, revealing that AU avoids negative sampling by enforcing uniformity over the entire set of user and item embeddings, thereby eliminating performance fluctuations derived from the random negative sampling. The proof is detailed in Appendix \ref{M}. However, under sparse supervision signals, the effectiveness of AU principle is limited due to the sparse observed positive pairs and fails to further exploit potential positive pairs. This inherent weakness motivates us to study how to enrich supervision signals from a semantic enhanced perspective, thereby strengthening the generalization of CF models. 

\section{Methodology}
In this section, we propose the \textbf{\underline{S}}em\textbf{\underline{a}}ntic \textbf{\underline{F}}actor \textbf{\underline{e}}nhanced \textbf{\underline{A}}lignment and \textbf{\underline{U}}niformity framework (\textbf{SaFeAU}), which breaks through the limitations of conventional CF learning paradigm that operates solely on the instance-level. By introducing semantic factors to identify potential positive pairs, SaFeAU elaborates fine-grained supervision signals without the multi-hop neighborhood propagation, thereby avoiding the computational complexity and over-smoothing issues of GCNs.

\subsection{Semantic Factor Routing (SFR)}

\begin{algorithm}[htbp]
\caption{Semantic Factor Routing (SFR)}
\label{alg:SFR}
\begin{algorithmic}[1]
\AlgInput{item embeddings $\{\tilde{f(i)} \mid i \in \mathcal{I}_b\}$, iteration times $r$, number}
\AlgI{of semantic factors $K$}
\AlgOutput{semantic factors $\{\mathcal{F}_j \mid j = 1, \dots, K\}$}
\State Initialize routing logits: $b_{ij} \sim \mathcal{N}(0, \sigma^2)$ for all item $i$ and semantic factor $j$.
\For{$t = 1$ to $r$}
    \State Compute semantic assignment weights:
    \Statex \quad $w_{ij} \gets \mathrm{softmax}(b_{ij})$
    \State Aggregate semantic factor: $z_j \gets \sum_{i \in \mathcal{I}_b} w_{ij} \tilde{f(i)}$
    \State Normalize: $\mathcal{F}_j \gets \mathrm{normalize}(z_j)$
    \State Update routing logits: $b_{ij} \gets b_{ij} + \mathcal{F}_j^\top \tilde{f(i)}$
\EndFor
\State \Return $\{\mathcal{F}_j\}_{j=1}^K$
\end{algorithmic}
\end{algorithm}

Following the batch-wise training framework of DirectAU \cite{DirectAU}, semantic factors are computed using only the items in the current training batch $\mathcal{I}_b$. This batch-wise computation significantly reduces memory usage and computational cost, while also making it more consistent with the actual data distribution of users and items. To learn semantic factors within each batch, we propose a Semantic Factor Routing (SFR) algorithm, inspired by the dynamic routing procedure in capsule networks \cite{Dynamic-routing}. Specifically, SFR dynamically routes each normalized item representation $\tilde{f(i)} \in \mathbb{R}^d$ ($i \in \mathcal{I}_b$) to $K$ high-level semantic factors $\{\mathcal{F}_j \in \mathbb{R}^d\}_{j=1}^K$, where each factor captures a mutually independent attribute such as brand or material. This shared embedding dimensionality $d$ ensures that item representations and semantic factors lie in the same vector space, allowing direct computation without intermediate transformation matrices.

The semantic assignment weight $w_{ij}$ determines how much item $i$ contributes to semantic factor $j$. To enable adaptive assignment of items to semantic factors, we first introduce learnable routing logits $b_{ij}$ that govern the soft routing of item $i$ to semantic factor $j$. These logits are initialized with small random values, i.e., $ b_{ij} \sim \mathcal{N}(0, \sigma^2)$, providing an unbiased normal initialization for the subsequent iterative routing process. In each routing round, the semantic assignment weight $w_{ij}$ is updated by applying a softmax normalization over the routing logits $b_{ij}$:

\begin{equation}
w_{ij} = \frac{\exp b_{ij}}{\sum_{k=1}^K \exp b_{ik}}.
\end{equation}

The intermediate representation $z_j$ is obtained as the weighted sum of all item embeddings in the current training batch $\mathcal{I}_b$:
\begin{equation}
    z_j = \sum_{i \in \mathcal{I}_b} w_{ij} \tilde{f(i)}.
\end{equation}

This representation is then normalized to yield the semantic factor:
\begin{equation}
    \mathcal{F}_j = \frac{z_j}{\|z_j\|}.
\end{equation}

Finally, the routing logits are updated to reinforce the agreement between items and factors:
\begin{equation}
    b_{ij} \leftarrow b_{ij} + \mathcal{F}_j^\top \tilde{f(i)}.
\end{equation}
These steps are repeated for $r$ rounds to iteratively refine the semantic factors. The complete SFR algorithm is detailed in Algorithm \ref{alg:SFR}.

\begin{table*}
\centering
\caption{Time complexity comparison in SaFeAU with the current state-of-the-art methods.}
\label{tab:time_complexity}
\setlength{\tabcolsep}{6pt} 
\renewcommand{\arraystretch}{1.0}
\begin{tabular}{c|c|c|c|c|c|c}
\toprule
\textbf{Components} & \textbf{LightGCN} & \textbf{FourierKAN-GCF} & \textbf{SimGCF} & \textbf{LightGODE} & \textbf{LightCCF} & \textbf{SaFeAU} \\
\midrule
Graph Construction & $\mathcal{O}(2|\mathcal{E}|)$ & $\mathcal{O}(2|\mathcal{E}|)$ & $\mathcal{O}(2|\mathcal{E}|)$ & -- & -- & -- \\
Graph Convolution & $\mathcal{O}(2|\mathcal{E}|\mathcal{L}d)$ & $\mathcal{O}(2|\mathcal{E}|\mathcal{L}d(1+g))$ & $\mathcal{O}(2n|\mathcal{E}|d)$ & -- & -- & -- \\
Loss Computation & $\mathcal{O}(2\mathcal{B}d)$ & $\mathcal{O}(2\mathcal{B}d)$ & $\mathcal{O}(2\mathcal{B}d)$ & $\mathcal{O}(\mathcal{B}d+2\mathcal{B}^2d)$ & $\mathcal{O}(3\mathcal{B}d+2\mathcal{B}^2d)$ & $\mathcal{O}(\mathcal{B}d+rK\mathcal{B}d+2\mathcal{B}^2d)$ \\
\bottomrule
\end{tabular}
\end{table*}

\subsection{Semantic Factor Matching (SFM)}

After obtained semantic factors $\{\mathcal{F}_j\}_{j=1}^K$, we propose Semantic Factor Matching (SFM) method to semantically correlate each item with its similar items under the shared semantic factors. For each item $i$, SFM selects the top-$k$ semantic factors based on the semantic assignment weights $\{w_{ij}\}_{j=1}^K$, where the selected factor indices set $\mathcal{K}_i$ is denoted as:

\begin{equation}
    \mathcal{K}_i = \operatorname{top}-k \left( \{ w_{ij} \}_{j=1}^K \right).
\end{equation}

To dynamically control the degree of semantic correlation between two items, we introduce a semantic threshold $\delta$ that defines the minimum number of shared top-$k$ semantic factors required for two items to judge whether a semantic correlation existing between them. Accordingly, for each interacted item $i$ ($i \in \mathcal{I}_b$), we define its semantic augmentation set $\mathcal{N}_i$ as follows:

\begin{equation}
    \mathcal{N}_i = \bigl\{ i' \in \mathcal{I}_b \setminus \{i\} \,\big|\, \lvert \mathcal{K}_i \cap \mathcal{K}_{i'} \rvert \geq \delta \bigr\}.
\label{threshold}
\end{equation}

Next, SFM augments the supervision signals beyond observed positive pairs by constructing potential positive pairs from the semantic augmentation set $\mathcal{N}_i$. Specifically, for each interacted instance $(u, i)$, any item $i' \in \mathcal{N}_i$ is regarded as a semantically consistent extension of $i$, yielding the augmented positive pair set $\{(u, i') \mid i' \in \mathcal{N}_i\}$ for SPA module.

\subsection{Semantic Pairs Alignment (SPA)}

As discussed in Section \ref{bypassing}, AU abandons the explicit negative sampling through global uniformity and thus avoids the performance fluctuations caused by randomly sampled negative pairs. However, its alignment loss is confined to observed interactions. To extend supervision signals to semantically correlated but unobserved items, we propose Semantic Pairs Alignment (SPA) module, which constructs additional positive pairs based on shared semantic factors, thereby enhancing generalization under data sparsity.

Specifically, SPA mainly aims at aligning users with the semantic augmented items in $\mathcal{N}_i$ produced by SFM. The semantic positive pairs alignment loss is formulated as:

\begin{equation}
    L_{\text{align}}^{\text{semantic}} = \underset{(u,i) \sim p_{\text{pos}}}{\mathbb{E}} \left[ 
         \frac{1}{|\mathcal{N}_i|} \sum_{i' \in \mathcal{N}_i} \left\| \tilde{f(u)} -\tilde{f(i')} \right\|^2
    \right].
\end{equation}

By pulling user representations toward those of semantically consistent unobserved items, SPA effectively propagates supervision signals through shared semantic factors. This encourages the model to generalize user preferences to semantically correlated but unobserved items, thereby mitigating data sparsity.

Additionally, our SPA leverages the base alignment loss $L_{\text{align}}$ and uniformity loss $L_{\text{uniform}}$, which are defined in Equation \eqref{align} and Equation \eqref{uniform}, respectively, to effectively model the interacted instances as well.

\subsection{Model Optimization}

To learn high-quality user and item representations, we utilize a multi-task training strategy \cite{SGL} to jointly optimize the base alignment loss, the uniformity loss, and the semantic positive pairs alignment loss. The total loss is defined as:

\begin{equation}
    L_{\text{SaFeAU}} = L_{\text{align}}+ \gamma_1 L_{\text{uniform}}+ \gamma_2 L_{\text{align}}^{\text{semantic}},
\end{equation}
where $\gamma_1 > 0$ and $\gamma_2 > 0$ are hyperparameters controlling the relative importance of uniformity and semantic positive pairs alignment, respectively. After completing the model training process, we use the dot product to predict unknown preferences for recommendations.

\subsection{Time Complexity Analysis}
To demonstrate the efficiency of SaFeAU, we analyze its time complexity and compare it with that of GCN-based methods including LightGCN \cite{LightGCN}, FourierKAN-GCF \cite{FourierKAN-GCF}, and SimGCF \cite{KDD26}, as well as MF-based methods including LightGODE \cite{LightGODE} and LightCCF \cite{LightCCF}. The complexity details of all methods are shown in Table \ref{tab:time_complexity}. We define the number of edges in the user-item bipartite graph as $|\mathcal{E}|$. Let $\mathcal{L}$ represent the number of graph convolution layers, $d$ and $\mathcal{B}$ represent the size of embedding and the batch size, respectively. In addition, for FourierKAN-GCF, $g$ denotes the Fourier frequency grid size, and for SimGCF, $n$ represents the polynomial filter order. We add the semantic factor routing to generate semantic factors, where $r$ denotes the maximum number of iterations and $K$ is the number of semantic factors. On this basis, we can derive the following facts:
\begin{itemize}[leftmargin=*, nosep]
\item In the graph construction process, both GCN-based methods require normalization of the adjacency matrix. This step involves computing $2|\mathcal{E}|$ non-zero elements of the original adjacency matrix. On the contrary, MF-based alleviates the need for graph construction and adjacency matrix normalization in training.
\item In the graph convolution stage, GCN-based methods perform message passing over the interaction graph, resulting in a time complexity of $\mathcal{O}(2|\mathcal{E}|\mathcal{L} d)$. FourierKAN-GCF has a higher complexity of $\mathcal{O}(2|\mathcal{E}|\mathcal{L} d (1 + g))$ due to edge-wise Fourier KAN transformations, while SimGCF incurs $\mathcal{O}(2n|\mathcal{E}|d)$ due to its degree-$n$ spectral polynomial approximation. In contrast, MF-based methods avoid explicit graph propagation, eliminating the dependence on $|\mathcal{E}|$ and facilitating the deployment in large-scale RecSys.
\item In the loss computation stage, GCN-based methods employ the BPR loss, resulting in a complexity of $\mathcal{O}(2\mathcal{B}d)$. LightCCF further incorporates a neighborhood alignment loss based on InfoNCE, resulting in a total loss complexity of $\mathcal{O}(3\mathcal{B}d + 2\mathcal{B}^2 d)$. Both LightGODE and SaFeAU adopt the AU loss formulation, which incurs a complexity of $\mathcal{O}(\mathcal{B}d + 2\mathcal{B}^2 d)$. The additional linear term in SaFeAU comes from generating semantic factors, which costs $\mathcal{O}(rK\mathcal{B}d)$, where $r$ and $K$ are small constants. Notably, the AU loss in SaFeAU is sampling-free, unlike BPR, which relies on sampled negative pairs, and can be parallel computed and optimized by simple matrix multiplications on GPUs. Overall, by combining semantic factor generation with AU loss, SaFeAU maintains a competitive computational complexity. Due to space limitation, the computational efficiency of SaFeAU is detailed in Appendix \ref{efficiency}.
\end{itemize}

\begin{table*}
\centering
\caption{A comparison between SaFeAU and state-of-the-art baselines is provided. The best values are shown in bold, while the second-best values are underlined, results are averaged over five runs. R@k refers to Recall@k, and N@k refers to NDCG@k. Improv.\% indicates the relative improvement over the top-performing baseline. An asterisk (*) indicates significant improvements with a t-test $p< 0.05$ compared to the best baseline.}
\label{tab:3}
\small
\begin{threeparttable}[b]
\resizebox{\linewidth}{!}{
\begin{tabular}{cccccccccccc}
\toprule
\multicolumn{2}{c}{\textbf{Setting}} & \multicolumn{4}{c}{\textbf{GCN-based}} & \multicolumn{4}{c}{\textbf{MF-based}} & \multicolumn{2}{c}{\textbf{Ours}} \\
\cmidrule(lr){1-2} \cmidrule(lr){3-6} \cmidrule(lr){7-10} \cmidrule(lr){11-12}
 Dataset & Metric & LightGCN & GraphAU & FourierKAN-GCF &  SimGCF & BPR-MF & DirectAU & LightGODE & LightCCF & SaFeAU & Improv. \\
\midrule
\multirow{4}{*}{Gowalla} 
 & R@10 & 0.0887 & 0.0976 & 0.0956 & 0.0962 & 0.0490 & 0.0905 & \underline{0.0983} & 0.0965 & \textbf{0.1034}$^*$ &  \textbf{5.19\%} \\
 & N@10 & 0.0604 & 0.0673 & 0.0664 & 0.0667 & 0.0342 & 0.0627 & \underline{0.0680} & 0.0670 & \textbf{0.0713}$^*$  & \textbf{4.85\%} \\
 & R@20 & 0.1223 & 0.1416 & 0.1372 & 0.1379 & 0.0701 & 0.1313 & \underline{0.1428} & 0.1384 & \textbf{0.1505}$^*$ & \textbf{5.39\%} \\
 & N@20 & 0.0681 & 0.0801 & 0.0778 & 0.0783 & 0.0401 & 0.0742 & \underline{0.0803} & 0.0786 & \textbf{0.0844}$^*$ & \textbf{5.11\%} \\
\midrule
\multirow{4}{*}{Toys-and-Games} 
 & R@10 & 0.0918 & 0.0934 & 0.0942 & 0.0945 & 0.0782 & 0.0941 & \underline{0.0951} & 0.0950 & \textbf{0.1006}$^*$ & \textbf{5.78\%} \\
 & N@10 & 0.0512 & 0.0529 & 0.0534 & 0.0538 & 0.0434 & 0.0534 & 0.0540 & \underline{0.0552} & \textbf{0.0578}$^*$ & \textbf{4.71\%} \\
 & R@20 & 0.1129 & 0.1323 & 0.1315 & 0.1289 & 0.1126 & 0.1325 & \underline{0.1338} & 0.1301 & \textbf{0.1405}$^*$ & \textbf{5.01\%} \\
 & N@20 & 0.0551 & 0.0619 & 0.0628 & 0.0636 & 0.0522 & 0.0632 & 0.0639 & \underline{0.0641} & \textbf{0.0681}$^*$ & \textbf{6.24\%} \\
\midrule
\multirow{4}{*}{Beauty} 
 & R@10 & 0.0843 & 0.0992 & 0.0995 & 0.0998 & 0.0790 & 0.0983 & \underline{0.1014} & 0.0997 & \textbf{0.1066}$^*$  & \textbf{4.93\%} \\
 & N@10 & 0.0463 & 0.0561 & 0.0565 & 0.0561 & 0.0429 & 0.0559 & 0.0563 & \underline{0.0570} & \textbf{0.0599}$^*$  & \textbf{4.74\%} \\
 & R@20 & 0.1227 & 0.1427 & 0.1433 & 0.1429 & 0.1161 & 0.1390 & \underline{0.1451} & 0.1362 & \textbf{0.1509}$^*$  & \textbf{3.93\%} \\
 & N@20 & 0.0563 & 0.0673 & 0.0677 & 0.0675 & 0.0526 & 0.0665 & \underline{0.0681} & 0.0674 & \textbf{0.0715}$^*$  & \textbf{4.70\%} \\
\midrule
\multirow{4}{*}{Yelp2018} 
 & R@10 & 0.0495 & 0.0670 & 0.0662 & 0.0672 & 0.0381 & 0.0672 & 0.0672 & \underline{0.0676} & \textbf{0.0695}$^*$  & \textbf{2.81\%} \\
 & N@10 & 0.0393 & 0.0538 & 0.0536 & 0.0547 & 0.0300 & 0.0539 & 0.0540 & \underline{0.0548} & \textbf{0.0555}$^*$ & \textbf{1.28\%} \\
 & R@20 & 0.0814 & 0.1063 & 0.1058 & 0.1082 & 0.0639 & 0.1069 & 0.1076 & \underline{0.1099} & \textbf{0.1144}$^*$  & \textbf{4.09\%} \\
 & N@20 & 0.0505 & 0.0671 & 0.0669 & 0.0686 & 0.0387 & 0.0672 & 0.0676 & \underline{0.0691} & \textbf{0.0713}$^*$  & \textbf{2.61\%} \\
\bottomrule
\end{tabular}}
\end{threeparttable}
\end{table*}

\begin{table}
\centering
\caption{Statistics of datasets.}
\label{tab:1}
\begin{tabular}{ccccc}
\toprule
Dataset & Users & Items & Interactions & Density\\
\midrule
Gowalla & 64.1k & 164.5k & 2018.4k & 0.019\% \\
Toys-and-Games & 19.4k & 11.9k & 167.5k & 0.072\% \\
Beauty & 22.4k & 12.1k & 198.5k & 0.073\% \\
Yelp2018 & 31.7K & 38.0k & 1561.4k & 0.130\% \\
\bottomrule
\end{tabular}
\end{table}

\section{Experiments}
In this section, we conduct experiments to address the following questions:

\begin{enumerate}[label=Q\arabic*., leftmargin=*, itemsep=1pt]
    \item How does SaFeAU perform in CF compared to state-of-the-art baselines?
    
    \item How adaptation is the potential positive pair identifying componets of SaFeAU for various CF methods?
    
    \item How do different potential positive pair identifying strategies of SFM method affect the recommendation performance?
    
    \item Does SFM truly identify higher-quality potential positive pairs, beyond gains from random positive expansion?
    
    \item How does SaFeAU effectively align positive pairs compared to mainstream CF methods?
    
    \item How do different hyperparameter settings affect the recommendation performance of SaFeAU? 
    
\end{enumerate}

\subsection{Experimental Settings}

\subsubsection{\textbf{Datasets}} 

We evaluate SaFeAU on four real-world datasets: Gowalla \footnote{\url{https://snap.stanford.edu/data/loc-gowalla.html}}, Yelp2018 \footnote{\url{https://www.yelp.com/dataset}}, and the Amazon subsets \footnote{\url{https://jmcauley.ucsd.edu/data/amazon/links.html}} Toys-and-Games and Beauty. Following standard practice \cite{NGCF,LightGCN,DirectAU,bootstrapping,modeling}, we remove duplicate interactions and retain only users and items with at least five interactions. The resulting statistics are reported in Table \ref{tab:1}. For each user, we randomly split interactions into training, validation, and testing sets using an 80\%/10\%/10\% split. We compare SaFeAU against state-of-the-art MF-based and GCN-based CF methods. Additional details on baselines are provided in Appendix \ref{baselines}.

\subsubsection{\textbf{Evaluation Protocols}} 
\label{evalution}
To evaluate the performance of top-K recommendation, we employ Recall and Normalized Discounted Cumulative Gain (NDCG) as evaluation metrics. Recall@K measures how many target items are retrieved in a K-item recommended list, while NDCG@K further concerns about their positions in the list. To ensure the accuracy and reliability of our experimental results, we employ the full-ranking strategy \cite{evaluation} during testing, which involves ranking all uninteracted items for a given test user. Additionally, each experiment is repeated 5 times with different random seeds. We report the average scores across all metrics to ensure statistical significance.

\subsubsection{\textbf{Implementation Details}}
\label{implementation}
We use the RecBole \cite{RecBole} framework to implement all the methods for fair comparisons. Adam is used as the default optimizer and the maximum number of epochs is set to 300. Early stop is adopted if NDCG@20 on the validation dataset continues to drop for 10 epochs. We set the embedding size to 64 and the learning rate to 1e$-$3 for all the methods. The training batch size is set to 256 on Beauty and 1024 on the other three datasets. The weight decay is tuned among [0, 1e$-$8, 1e$-$6, 1e$-$4]. The default encoder $f$ in SaFeAU is a simple embedding table that maps user/item IDs to embeddings. The weight $\gamma_1$ of $L_{uniform}$ in SaFeAU is tuned within [0.1, 0.5, 2, 7, 10], and $\gamma_2$ of $L_{align}^{semantic}$ is tuned within [0.001, 0.01, 0.1, 10]. For hyperparameters that define the model architecture, we search $K$ over \{1, 2, 4, 7\}, $k$ over \{1, 2, 3, 4\}, and $r$ over \{1, 2, 3, 4, 5, 6, 7, 8\}. As for baseline-specific hyperparameters, we tune them in the ranges suggested by the original paper. All the parameters are initialized by xavier initialization.

\subsection{Recommendation Performance (Q1)}
Table \ref{tab:3} shows the performance comparison of different baseline methods and our SaFeAU. From the experimental results, we have the following observations:
\begin{itemize}[leftmargin=*, nosep]
\item Noticeably, SaFeAU achieves the highest scores in NDCG and Recall across all datasets, demonstrating its effectiveness in different recommendation scenarios. The pronounced performance improvements on sparse datasets, i.e., Gowalla and Toys, demonstrate SaFeAU’s robustness and generalization ability under high sparsity and large-scale scenarios.

\item Among all baselines, LightCCF and LightGODE emerge as the strongest competitors across all four datasets, demonstrating that after carefully designed in interaction modeling methods, MF-based approaches can outperform GCN-based models.

\item We observe that AU-based methods, including DirectAU, LightGODE, and GraphAU, perform better on sparser datasets. In contrast, LightCCF gains an advantage in denser datasets, which aligns with our analysis in Appendix \ref{M} showing that the AU principle avoids performance fluctuations caused by randomly sampled negative pairs.

\end{itemize}

\subsection{Compatibility with Existing CF Methods (Q2)}
\label{compatibility}

Apart from functioning as a stand-alone framework, the SFR and SFM components of SaFeAU can serve as a model-agnostic plugin for mainstream CF methods. We integrate it into BPR-MF, LightGCN, and LightCCF, and compare each variant with its original counterpart. As shown in Table \ref{tab:compatibility}, SaFeAU consistently achieves significant performance gains by refining supervision signals through semantically discovering potential positive pairs from uninteracted items.

\begin{table}[htbp]
\centering
\caption{Compatibility analysis with different CF models on three various density datasets. ``Rel Impr.'' indicates the relative improvements compared to the corresponding methods. \textcolor{orange5}{Orange} denotes an improvement, with darker shades indicating larger relative changes.}
\label{tab:compatibility}
\small
\setlength{\tabcolsep}{3.5pt} 
\begin{tabular*}{\columnwidth}{@{\extracolsep{\fill}}lcccccc@{}}
\toprule
\multirow{2}{*}{Method} & 
\multicolumn{2}{c}{Gowalla} & 
\multicolumn{2}{c}{Toys-and-Games} & 
\multicolumn{2}{c}{Yelp2018} \\
\cmidrule(lr){2-3} \cmidrule(lr){4-5} \cmidrule(lr){6-7}
 & R@20 & N@20 & R@20 & N@20 & R@20 & N@20 \\
\midrule
DirectAU & 0.1313 & 0.0742 & 0.1325 & 0.0632 & 0.1069 & 0.0672 \\
DirectAU+SaFe & 0.1505 & 0.0844 & 0.1405 & 0.0681 & 0.1144 & 0.0713 \\
\textbf{Rel Impr.} & \cellcolor{orange3} 14.62\% & \cellcolor{orange3} 13.75\% & \cellcolor{orange2} 6.04\% & \cellcolor{orange2} 7.75\% & \cellcolor{orange2} 7.02\% & \cellcolor{orange2} 6.10\% \\
\midrule
LightCCF & 0.1384 & 0.0786 & 0.1301 & 0.0641 & 0.1099 & 0.0691 \\
LightCCF+SaFe & 0.1422 & 0.0817 & 0.1306 & 0.0652 & 0.1131 & 0.0699 \\
\textbf{Rel Impr.} & \cellcolor{orange2} 2.75\% & \cellcolor{orange2} 3.94\% & \cellcolor{orange1} 0.38\% & \cellcolor{orange2} 1.72\% & \cellcolor{orange2} 2.91\% & \cellcolor{orange2} 1.16\% \\
\midrule
BPR-MF & 0.0701 & 0.0401 & 0.1126 & 0.0522 & 0.0639 & 0.0387 \\
BPR-MF+SaFe& 0.1163 & 0.0614 & 0.1129 & 0.0529 & 0.0778 & 0.0457 \\
\textbf{Rel Impr.} & \cellcolor{orange4} 65.91\% & \cellcolor{orange4} 53.12\% & \cellcolor{orange1} 0.27\% & \cellcolor{orange2} 1.34\% & \cellcolor{orange3} 21.75\% & \cellcolor{orange3} 18.09\% \\
\midrule
LightGCN & 0.1223 & 0.0681 & 0.1129 & 0.0551 & 0.0814 & 0.0505 \\
LightGCN+SaFe& 0.1317 & 0.0705 & 0.1135 & 0.0558 & 0.0877 & 0.0534 \\
\textbf{Rel Impr.} & \cellcolor{orange2} 7.69\% & \cellcolor{orange2} 3.52\% & \cellcolor{orange1} 0.53\% & \cellcolor{orange2} 1.27\% & \cellcolor{orange2} 7.74\% & \cellcolor{orange2} 5.74\% \\
\bottomrule
\end{tabular*}
\end{table}

\subsection{Potential Positive Pairs Identification (Q3)}
\label{SFM}

To evaluate the effectiveness of SFM in identifying potential positive pairs, we conduct an ablation study on its matching strategy. We compare four variants:

\begin{itemize}[leftmargin=*, nosep]
\item \textbf{None-SFM}: No unobserved items are identified as potential positives.
\item \textbf{Random-SFM}: Potential positives are randomly selected from unobserved items.
\item \textbf{Strict-SFM}: The unobserved items are considered as potential positives if they share all top-$k$ semantic factors with the given item, that is the semantic threshold of $\delta$ is set at k.
\item \textbf{SFM}: The unobserved items are considered as potential positives if they share at least $\delta$ semantic factors with the given item.
\end{itemize}

As shown in Table \ref{tab:ablation_sfa}, both Strict-SFM and SFM achieve consistent gains, validating the utility of potential positive pairs identification for the performance of CF. Notably, SFM outperforms Strict-SFM on all metrics. This is because a moderate semantic threshold such as $\delta = 2$ avoids insufficient coverage caused by high thresholds and the misidentification of true negatives as positives caused by low thresholds. Surprisingly, by comparing the results in Tables \ref{tab:3} and \ref{tab:ablation_sfa}, we find that Random-SFM outperforms all baselines on sparse datasets, i.e., Gowalla and Toys-and-Games, revealing that baseline methods fail to identify the widely existing false negatives. Conversely, on denser datasets, i.e., Beauty and Yelp2018 Random-SFM performs worse than None-SFM, underscoring the need for better mechanisms to identify potential positive items as data density grows.

\begin{table}[htbp]
\centering
\caption{An ablation study on SFM regarding the strategy of identifying potential positive pairs. For brevity, the dataset ``Toys-and-Games'' is abbreviated as Toys in the table. \textcolor{blue5}{Blue} denotes a decrease in performance relative to the None-SFM, and \textcolor{orange5}{Orange} denotes an improvement, with darker shades indicating larger relative changes.}
\label{tab:ablation_sfa}
\small 
\setlength{\tabcolsep}{4pt} 
\begin{tabular}{@{}llcccc@{}}
\toprule
Dataset & Metric & None-SFM & Random-SFM & Strict-SFM & SFM \\
\midrule
\multirow{2}{*}{Gowalla} & R@20 & 0.1315 & \cellcolor{orange2}0.1478 & \cellcolor{orange3}0.1501 & \cellcolor{orange3}\textbf{0.1505} \\
                         & N@20 & 0.0746 & \cellcolor{orange2}0.0828 & \cellcolor{orange3}0.0842 & \cellcolor{orange3}\textbf{0.0844} \\
\midrule
\addlinespace
\multirow{2}{*}{Toys} & R@20 & 0.1322 & \cellcolor{orange1}0.1381 & \cellcolor{orange2}0.1386 & \cellcolor{orange2}\textbf{0.1405} \\
                         & N@20 & 0.0627 & \cellcolor{orange1}0.0663 & \cellcolor{orange2}0.0675 & \cellcolor{orange2}\textbf{0.0681} \\
\midrule
\addlinespace
\multirow{2}{*}{Beauty} & R@20 & 0.1394 & \cellcolor{blue2}0.1295 & \cellcolor{orange3}0.1495 & \cellcolor{orange3}\textbf{0.1509} \\
                         & N@20 & 0.0667 & \cellcolor{blue2}0.0645 & \cellcolor{orange3}0.0711 & \cellcolor{orange3}\textbf{0.0715} \\
\midrule
\multirow{2}{*}{Yelp2018} & R@20 & 0.1062 & \cellcolor{blue1}0.1044 & \cellcolor{orange2}0.1132 & \cellcolor{orange2}\textbf{0.1144} \\
                         & N@20 & 0.0669 & \cellcolor{blue1}0.0649 & \cellcolor{orange2}0.0706 & \cellcolor{orange2}\textbf{0.0713} \\                         
\bottomrule
\end{tabular}
\end{table}

\subsection{Validation of Potential Positive Pair Quality (Q4)}

While SFM improves downstream ranking metrics, results from the ablation study in Table \ref{tab:ablation_sfa} show that Random-SFM can perform surprisingly well on sparse datasets. This raises the question of whether SFM's gains truly stem from precise semantic matching rather than general positive-expansion or relaxed-negative assumptions. To address this, we directly take the potential positive pairs identified by SFM on the training set for evaluation. Based on these selected pairs, we measure the proportion that match the ground-truth positives in the testing set using HR@10, NDCG@10, and Recall@10. The results are presented in Table \ref{tab:delta_analysis}.

SFM consistently outperforms Random-SFM under various settings of $\delta$, demonstrating that SFM identifies genuinely higher-quality potential positives rather than merely benefiting from random expansion. Furthermore, the optimal performance is consistently achieved at $\delta=2$, indicating that the mechanism of SFM is robust and does not require complex threshold tuning.

\begin{table}[htbp]
\centering
\small
\setlength{\tabcolsep}{0pt} 
\caption{Performance comparison of SFM under different semantic thresholds $\delta$, with top-$k$ semantic factors fixed at 3. H@10 denotes HR@10.}
\label{tab:delta_analysis}

\begin{tabular*}{\columnwidth}{@{\extracolsep{\fill}} l ccc ccc @{}} 
\toprule
\multirow{2}{*}{\textbf{Dataset}} & \multicolumn{3}{c}{\textbf{Gowalla}} & \multicolumn{3}{c}{\textbf{Beauty}} \\
\cmidrule(lr){2-4} \cmidrule(lr){5-7} 
 & \textbf{H@10} & \textbf{N@10} & \textbf{R@10} & \textbf{H@10} & \textbf{N@10} & \textbf{R@10} \\
\midrule
Random-SFM & 0.000018 & 0.000003 & 0.000005 & 0.00085 & 0.000319 & 0.00075 \\
$\delta = 1$ & 0.0217 & 0.0125 & 0.0153 & 0.0472 & 0.0381 & 0.0493 \\
$\delta = 2$ & \textbf{0.0305} & \textbf{0.0148} & \textbf{0.0184} & \textbf{0.0495} & \textbf{0.0435} & \textbf{0.0537} \\
$\delta = 3$ & 0.0264 & 0.0137 & 0.0169 & 0.0491 & 0.0427 & 0.0528 \\
\bottomrule
\end{tabular*}
\end{table}

\subsection{Effectiveness in Positive Pair Alignment (Q5)}
\label{TFP}

To assess the effectiveness of SaFeAU in capturing interacted instances, we compute the average Euclidean distance between normalized representations of positive pairs in the semantic space. Table \ref{tab:alignment_property} presents these distances for the training (\textit{train}) and testing (\textit{test}) sets, while the \textit{Gap} column quantifies the difference between them, calculated as $\textit{Gap} = \textit{test} - \textit{train}$. Based on these results, we draw the following observations:

\begin{itemize}[leftmargin=*, nosep]
\item SaFeAU achieves small average distances on both training and testing sets, indicating that it is capable to learn user and item representations with superior interaction semantics.

\item The shortest distance gaps achieved by SaFeAU demonstrate its strong generalization ability. By aligning semantic latent positive pairs, SaFeAU effectively transfers interaction semantics from the training set to the testing set. Consequently, SaFeAU attains the smallest average distance gaps on both datasets, i.e., 0.0314 for Gowalla and 0.0178 for Toys-and-Games.

\item In contrast, SimGCF achieves a low distance on the training set but a much higher distance on the testing set, indicating that its representations degrade when generalizing to unseen interactions. As the number of graph convolutional layers increases, user and item embeddings tend to collapse toward a common region in the embedding space, reducing the variation among training positive pairs. While this leads to small training distances, the resulting homogeneous representations fail to capture fine-grained interaction patterns, causing poor alignment with positive pairs in the testing set and limiting generalization.

\end{itemize}

\begin{table}[htbp]
\centering
\caption{The average distance of positive pairs.}
\label{tab:alignment_property}
\small 
\setlength{\tabcolsep}{4pt} 
\begin{tabular*}{\columnwidth}{@{\extracolsep{\fill}}lcccccc@{}}
\toprule
\multirow{2}{*}{\textbf{Methods}} & 
\multicolumn{3}{c}{\textbf{Gowalla}} & 
\multicolumn{3}{c}{\textbf{Toys-and-Games}} \\
\cmidrule(lr){2-4} \cmidrule(lr){5-7}
 & \textbf{train} & \textbf{test} & \textbf{Gap} & \textbf{train} & \textbf{test} & \textbf{Gap} \\
\midrule
BPR-MF     & 0.8074 & 0.8458 & 0.0384 & 0.8137 & 0.8576 & 0.0439 \\
LightCCF   & 0.7468 & 0.8097 & 0.0629 & 0.7937 & 0.8194 & 0.0257 \\
SimGCF     & 0.7161 & 0.8336 & 0.1175 & 0.7553 & 0.8413 & 0.0860 \\
DirectAU   & 0.7652 & 0.8222 & 0.0570 & 0.7927 & 0.8329 & 0.0402 \\
SaFeAU     & 0.7529 & 0.7843 & 0.0314 & 0.7864 & 0.8042 & 0.0178 \\
\bottomrule
\end{tabular*}
\end{table}

\subsection{Parameter Sensitivity (Q6)}

We investigated how the five hyperparameters impact the performance of SaFeAU in terms of NDCG@20, including the important weight of uniformity $\gamma_1$, the important weight of semantic positive pair alignment $\gamma_2$, the number of semantic factors $K$, the number of top-ranked semantic factors $k$, the maximum number of iterations $r$.

\begin{figure}[ht]
    \centering
    \includegraphics[width=1.02\linewidth]{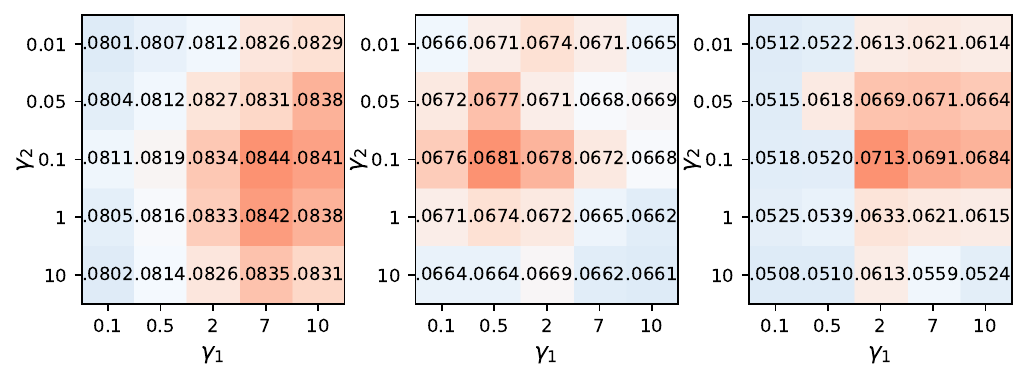}
    
    \vspace{0.4em}
    \makebox[\linewidth][c]{%
        \small \textbf{(a) Gowalla} \hspace{3em}%
        \textbf{(b) Toys-and-Games} \hspace{2em}%
        \textbf{(c) Yelp2018} \hspace{-2.52em}%
    }
    
    \vspace{-0.8em}
    \caption{Impact of the uniformity weight $\gamma_1$ and the semantic positive pairs alignment weight $\gamma_2$ on NDCG@20 across three datasets. \label{fig:3}}
\end{figure}

Figure \ref{fig:3} indicates that, in most cases, the uniformity loss has a greater impact on performance than the semantic positive pairs alignment loss. 
This suggests that, due to sparse interactions, promoting a uniformly distributed representations plays a dominant role, as it enhances the coverage of semantic space and mitigates distribution bias. While larger and denser datasets fit for a higher $\gamma_1$, which means for a stronger uniformity regularization. Smaller and sparser ones like Toys-and-Games achieve peak performance at moderate $\gamma_1$, e.g., $\gamma_1 = 0.5$.

\begin{figure}[ht]
    \centering
    \includegraphics[width=1.02\linewidth]{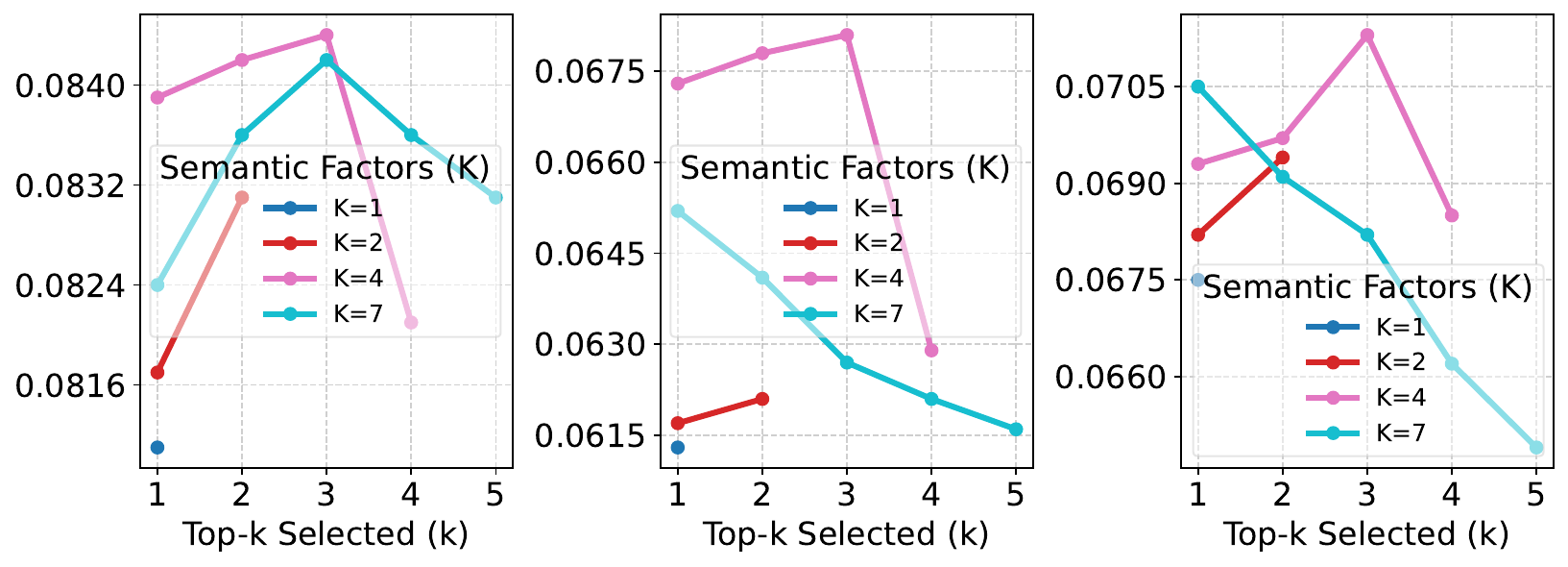}
    
    \vspace{0.4em}
    \makebox[\linewidth][c]{%
        \small \textbf{(a) Gowalla} \hspace{3em}%
        \textbf{(b) Toys-and-Games} \hspace{2em}%
        \textbf{(c) Yelp2018} \hspace{-2.52em}%
    }
    
    \vspace{-0.8em}
    \caption{Effect of the number of semantic factors $K$ and the top-$k$ selection size on NDCG@20 across three datasets.
    \label{fig:4}}
\end{figure}

Figure \ref{fig:4} shows that the best performance is achieved with $K=4$ semantic factors and top-$k=3$ shared factors. Using four factors captures enough semantic diversity among items, while considering only the top-3 shared factors filters out irrelevant information and highlights the most meaningful similarities.

We further investigate the sensitivity of model performance to the number of semantic factor routing iterations $r$. As shown in Figure \ref{fig:routing}, performance across all three datasets plateaus after $r = 4$, with marginal gains from additional iterations. This confirms that $r = 4$ provides a good trade-off between convergence and efficiency.

\vspace{-0.5em}  
\begin{figure}[ht]
    \centering
    \includegraphics[width=0.9\linewidth]{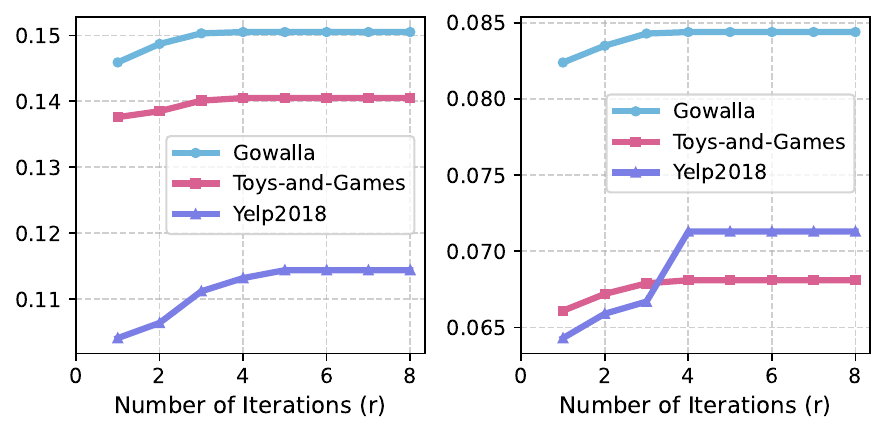} \\
    \vspace{-1.5em}
    
    \subcaptionbox{\label{fig:a}Recall@20}[0.48\linewidth]{\centering\hspace{0.5\linewidth}}%
    \hfill%
    \subcaptionbox{\label{fig:b}NDCG@20}[0.48\linewidth]{\raggedright\hspace{0.5\linewidth}}
    \vspace{-0.6em}
    \caption{The relationship between semantic factor routing iterations and performance across three datasets. \label{fig:routing}}
\end{figure}
\vspace{-0.5em}  

\section{Related Work}

\subsection{Alignment and Uniformity in Recommendation}
Owing to its simplicity and effectiveness, collaborative filtering (CF) is established as a canonical technique in RecSys for helping users discover potential items of interest \cite{LightGCN,NGCF}. Traditional CF methods are mainly based on matrix factorization (MF) \cite{MF1,MF2} and graph convolutional networks (GCNs) \cite{LightGCN,NGCF}. Because GCNs can capture high-order connectivity, they have been widely adopted to learn more powerful user/item representations \cite{NGCF,LightGCN,DISC, FairDDA,MBID, SCKGD}. However, the over-smoothing problem \cite{over-smoothing} remains an inherent challenge in GCNs-based CF.

DirectAU \cite{DirectAU} was the first model to explore the desired properties of representations in CF from the perspective of alignment and uniformity (AU). Building on this work, several subsequent models have been proposed to adopted the AU principle to enhance their predictions. For instance, GraphAU \cite{GraphAU} extends the AU principle by proposing high-order representation alignment, which improves the computational scalability in terms of incorporating multiple graph convolution layers. LightGODE \cite{LightGODE} takes a different approach by questioning the necessity of time-intensive graph convolutions and introducing a highly efficient post-training graph convolution framework. However, the aforementioned methods rely on instance-level learning paradigm and neglect the importance of exploiting false negatives under sparse CF supervision signals. When coping with large scale scenarios, their functionalities are limited to a narrow scope neighborhood alignment due to the computational complexity of GCNs-based methods.

\subsection{Disentangled Representation Learning for Recommendation}
The disentanglement in the feature space encourages the learned representation to carry explainable semantics with independent factors \cite{2013representation}, showing great potential to improve various tasks from the three aspects: interpretability, robustness and controllability \cite{disentangled}. Most existing work on disentangled representation learning have focused on computer vision \cite{smoothing} and natural language processing. In RecSys, users exhibit diverse behaviors and preferences across different items. Applying disentanglement learning to separate these underlying behavioral factors is meaningful for enhancing recommendation quality. However, research in this area is underexplored yet. Most existing approaches only perform coarse-grained disentanglement by separating the target behavior (e.g., buy) from auxiliary behaviors (e.g., click), as in PKEF \cite{PKEF}. Some methods disentangle item representations solely based on the items a single user has interacted with in sequential recommendation \cite{MIND, ComiRec, disentangled}, ignoring factors that are shared across multiple users. Disen-CGCN \cite{Disen-CGCN} considers the need to disentangle multiple factors. However, it mechanically assign feature factors based on embedding dimensions ignoring the fact that, semantic information may not be strictly correlated to a specific embedding dimension in deep learning models.

\section{Conclusion}

In this work, we propose Semantic Factor enhanced Alignment and Uniformity (SaFeAU) to alleviate the impact of false negatives in CF by enhancing supervision signals based on semantic enhanced learning paradigm. In detail, the Semantic Factor Routing algorithm is designed to learn a set of shared semantic factors through an iterative routing process. Then Semantic Factor Matching method is introduced to augment the sparse supervision signals of CF from a semantic perspective. Finally, Semantic Pairs Alignment module leverages the alignment and uniformity principle to align users with both their interacted items and also with SFM-identified potential positive items, while promoting uniformity across all user and item representations for enhancing their disparity. Extensive experiments on four datasets show that SaFeAU consistently outperforms GCN-based and MF-based state-of-the-art CF methods in both recommendation accuracy and computational efficiency. This work presents an attempt to expand supervision signals of CF from the semantic enhanced learning perspective and provides new research possibilities for efficient and large-scale recommendation.

\begin{acks}
This work was supported in part by the National Natural Science Foundation of China (No. 62566015 and No. U22A2099), the Natural Science Foundation of Guangxi Province (No. 2026GXNSFAA00641178), the Innovation Project of GUET Graduate Education (No. 2026YCXS082).
\end{acks}

\clearpage
\bibliographystyle{ACM-Reference-Format}
\balance
\bibliography{reference}
\appendix

\section{Investigation of Loss Functions}

In this section, we first reveal the underlying correlations among three mainstream CF optimization loss, i.e., BPR, InfoNCE and AU. We then show that BPR and InfoNCE rely on limited negative samples and thus incompletely model the negative item distribution. In contrast, AU eliminates negative sampling and directly optimizes the global geometry of the representation space, enabling stable learning even on sparse interactions or large item sets.

\subsection{Relationships Among Loss Functions}
\label{analysis_loss}

In CF, BPR, InfoNCE, and AU are three widely adopted loss functions. However, existing works typically employ these losses in isolation, without thoroughly investigating their deeper correlations in optimization objectives. Fundamentally, all three aim to pull positive samples closer and push negative samples apart in the latent representation space. Equations \eqref{1} and \eqref{2} show that InfoNCE reduces to BPR when \(M = 1\) and \(\tau = 1\). We next investigate the relationship between AU and InfoNCE.

By subtracting $\log M$, we normalize the InfoNCE loss to eliminate its dependence on the number of negative samples. As $M \to \infty$ with fixed $\tau > 0$, the normalized loss converges to a combination of alignment and uniformity terms.

\begin{theorem}[Normalized InfoNCE Converges to AU as $M \to \infty$] For fixed $\tau > 0$, As the number of negative pairs $ M \to \infty $, the normalized InfoNCE loss converges to:
\begin{align}
&\widetilde{L}_{\text{CL}}(f; \tau, M) \notag
=L_{\text{CL}}(f; \tau, M) - \log M \notag\\
&\approx -\frac{1}{\tau} \underset{(u,i) \sim p_{\text{pos}}}{\mathbb{E}} \left[s(u,i)\right] \notag \\
&\quad + \log \underset{(u,u') \sim p_{\text{user}}}{\mathbb{E}} \left[e^{-2 \cdot d(u,u')}\right] + \log \underset{(i,i') \sim p_{\text{item}}}{\mathbb{E}} \left[e^{-2 \cdot d(i,i')}\right]
\end{align}
\end{theorem}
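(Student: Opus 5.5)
Our plan follows the asymptotic argument of Wang and Isola for contrastive losses, adapted to the user--item setting. First we split the InfoNCE objective of Equation \eqref{2} into a positive and a negative part:
\begin{align*}
L_{\text{CL}}(f;\tau,M) = -\frac{1}{\tau}\underset{(u,i)\sim p_{\text{pos}}}{\mathbb{E}}\left[s(u,i)\right] + \mathbb{E}\left[\log\left(e^{s(u,i)/\tau} + \sum_{n=1}^{M} e^{s(u,j_n)/\tau}\right)\right].
\end{align*}
The first term is already the alignment term in the statement. Since $\tilde{f(\cdot)}$ is normalized, $s(u,i) = 1 - \tfrac{1}{2}d(u,i)$, so up to constants it equals the expected distance $L_{\text{align}}$ of Equation \eqref{align}.

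Next we subtract $\log M$ and rewrite the second term as
\begin{align*}
\mathbb{E}\left[\log\left(\frac{1}{M}e^{s(u,i)/\tau} + \frac{1}{M}\sum_{n=1}^{M} e^{s(u,j_n)/\tau}\right)\right].
\end{align*}
Because $|s|\le 1$, every exponential lies in $[e^{-1/\tau}, e^{1/\tau}]$. Hence the first summand is $O(1/M)$ and vanishes. By the strong law of large numbers, $\frac{1}{M}\sum_n e^{s(u,j_n)/\tau} \to \mathbb{E}_{j\sim p_{\text{item}}} e^{s(u,j)/\tau}$ almost surely. The argument of the logarithm is bounded away from $0$ and $\infty$, so $\log$ is Lipschitz there and dominated convergence applies. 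This lets us pass the limit inside the outer expectation. We obtain
\begin{align*}
\lim_{M\to\infty}\widetilde{L}_{\text{CL}} = -\frac{1}{\tau}\mathbb{E}_{p_{\text{pos}}}\left[s(u,i)\right] + \underset{u\sim p_{\text{user}}}{\mathbb{E}}\log \underset{j\sim p_{\text{item}}}{\mathbb{E}} e^{s(u,j)/\tau}.
\end{align*}
This step is rigorous.

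The main obstacle is turning this user--item log-partition term into the sum of a user--user and an item--item Gaussian potential, as the statement's "$\approx$" asserts. We will try two moves. First, apply Jensen's inequality to move the outer expectation inside the logarithm, which upper-bounds the term by $\log \mathbb{E}_{u,j} e^{s(u,j)/\tau}$. Second, substitute $s = 1 - d/2$, so the term becomes a constant plus $\log \mathbb{E} e^{-d(u,j)/(2\tau)}$. Matching the statement's coefficient $-2$ then requires the choice $\tau = 1/4$, or an absorbed rescaling of $\tau$.

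The cross term $\log\mathbb{E}_{u,j}e^{-2d(u,j)}$ must then be replaced by $\tfrac12\log\mathbb{E}_{u,u'}e^{-2d(u,u')} + \tfrac12\log\mathbb{E}_{i,i'}e^{-2d(i,i')}$. For this we would use the Gaussian-kernel form $e^{-2\|x-y\|^2}$, which is positive definite. Its Cauchy--Schwarz inequality in the RKHS bounds the cross potential by the geometric mean of the two self potentials, with equality when the user and item embedding distributions coincide.

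This final step is heuristic rather than an identity. Jensen's step gives a bound that is not tight in general, and the stated form is off by a factor of $2$ in the two log terms unless the halves from Equation \eqref{uniform} are absorbed into the approximation. We would therefore present it as an upper bound, holding up to additive constants and coefficient rescaling, and interpret "$\approx$" in that sense. We would flag explicitly that exact equality requires identical user and item embedding distributions at the optimum, which is the situation encouraged by perfect alignment.
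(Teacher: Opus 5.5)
Your argument matches the paper's up to the intermediate limit $-\frac{1}{\tau}\mathbb{E}[s(u,i)]+\mathbb{E}_{u}\log\mathbb{E}_{j}e^{s(u,j)/\tau}$, and your justification there is cleaner. The paper cites the Continuous Mapping Theorem and the SLLN, but exchanging $\lim_{M}$ with the outer expectation needs the bounded-convergence argument you give: the argument of the $\log$ stays in $[e^{-1/\tau},2e^{1/\tau}]$.

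The routes diverge at the final ``$\approx$''. The paper only asserts that, for embeddings uniform on the hypersphere, the user--item potential ``divides'' into the user--user and item--item potentials, with no inequality or error control. You instead derive an actual bound, using Jensen's inequality for the concave $\log$ followed by Cauchy--Schwarz for Gaussian-kernel mean embeddings:
\begin{align*}
\lim_{M\to\infty}\widetilde{L}_{\text{CL}}(f;\tau,M) &\le \frac{1}{2\tau}L_{\text{align}}+\frac{1}{2}\log\underset{(u,u')\sim p_{\text{user}}}{\mathbb{E}}e^{-d(u,u')/2\tau}\\
&\quad+\frac{1}{2}\log\underset{(i,i')\sim p_{\text{item}}}{\mathbb{E}}e^{-d(i,i')/2\tau}.
\end{align*}
At $\tau=1/4$ this reads $\lim_{M}\widetilde{L}_{\text{CL}}\le 2L_{\text{align}}+L_{\text{uniform}}$.

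The paper's version is shorter and conveys the intended informal identification of InfoNCE with AU. Yours turns it into a checkable statement: AU upper-bounds asymptotic InfoNCE. It also exposes that the displayed formula drops the additive $1/\tau$ and the factors $\tfrac12$, and that the coefficient $-2$ is tied to $\tau=1/4$.

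One refinement to your equality remark. Write $P_U,P_I$ for the laws of $\tilde{f(u)}$ and $\tilde{f(i)}$. Tightness of your chain needs two things, not just identical distributions:
\begin{itemize}
\item Jensen equality, i.e.\ $u\mapsto\mathbb{E}_{j}e^{s(u,j)/\tau}$ constant $p_{\text{user}}$-a.s.;
\item $P_U=P_I$, using that the Gaussian kernel is characteristic on the sphere.
\end{itemize}
Both hold when user and item embeddings are uniform on the sphere, which is exactly the paper's hypothesis. So the paper's ``$\approx$'' is the equality case of your bound, and even there the correct expression carries the $\tfrac12$'s and the constant $1/\tau$.
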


\begin{proof}
\vspace{-0.5em} 
\begin{align}
&\lim_{M \to \infty} \mathcal{L}_{CL} - log M\\
&= \lim_{M \to \infty} - \underset{\substack{(u,i) \sim p_{\text{pos}} \\ \{j_n\}_{n=1}^{M} \overset{\text{i.i.d.}}{\sim} p_{\text{item}}}}{\mathbb{E}} \log \frac{e^{\text{s}(u,i)/\tau}}{e^{\text{s}(u,i)/\tau} + \sum_{n=1}^{M} e^{\text{s}(u,j_n)/\tau}} \notag \\
&- log M\\
\label{7}
&= -\frac{1}{\tau} \underset{(u,i) \sim p_{\text{pos}}}{\mathbb{E}} \left[s(u,i)\right] \notag \\
& +\lim_{M \to \infty} \left[ \underset{\substack{(u,i) \sim p_{\text{pos}} \\ \{j_n\}_{n=1}^{M} \overset{\text{i.i.d.}}{\sim} p_{\text{item}}}}{\mathbb{E}} \log \left( \frac{1}{M} e^{\text{s}(u,i)/\tau} + \frac{1}{M} \sum_{n=1}^{M} e^{\text{s}(u,j_n)/\tau} \right) \right] \\
\label{8}
&= -\frac{1}{\tau} \underset{(u,i) \sim p_{\text{pos}}}{\mathbb{E}} \left[s(u,i)\right] \notag \\
& +\underset{\substack{(u,i) \sim p_{\text{pos}} \\ \{j_n\}_{n=1}^{M} \overset{\text{i.i.d.}}{\sim} p_{\text{item}}}}{\mathbb{E}} \left[ \lim_{M \to \infty} \log \left( \frac{1}{M} e^{\text{s}(u,i)/\tau} + \frac{1}{M} \sum_{n=1}^{M} e^{\text{s}(u,j_n)/\tau} \right) \right] \\
\label{9}
&= -\frac{1}{\tau} \underset{(u,i) \sim p_{\text{pos}}}{\mathbb{E}} \left[s(u,i)\right] + \underset{u \sim p_{\text{user}}}{\mathbb{E}} \left[ \log \underset{j \sim p_{\text{item}}}{\mathbb{E}} \left[e^{s(u,j)/\tau}\right] \right] \\
\label{10}
&= -\frac{1}{\tau} \underset{(u,i) \sim p_{\text{pos}}}{\mathbb{E}} \left[s(u,i)\right] + \underset{u \sim p_{\text{user}}}{\mathbb{E}} \left[ \log ( e^{1/{\tau}} \underset{j \sim p_{\text{item}}}{\mathbb{E}} \left[e^{-d(u,j)/2\tau}\right] )\right]\\
\label{11}
&\approx -\frac{1}{\tau} \underset{(u,i) \sim p_{\text{pos}}}{\mathbb{E}} \left[s(u,i)\right] \notag \\
&\quad + \log \underset{(u,u') \sim p_{\text{user}}}{\mathbb{E}} \left[e^{-2 \cdot d(u,u')}\right] + \log \underset{(i,i') \sim p_{\text{item}}}{\mathbb{E}} \left[e^{-2 \cdot d(i,i')}\right]
\end{align}

We derive from Equation \eqref{7} to Equation \eqref{8} using the Continuous Mapping Theorem and from Equation \eqref{8} to Equation \eqref{9} the strong law of large numbers (SLLN). Since when the user and item embeddings are uniformly distributed on the unit hypersphere, the similarity between a user and an item can be independently estimated based on their respective distances. So, the Equation \eqref{10} is divided by the uniformity of both users and items, utilizing the logarithm of the average pairwise Gaussian potential as described in \cite{AU}.
\end{proof}

This theorem reveals the key insights:
\begin{itemize}[leftmargin=*, nosep]
\item The first term is minimized if and only if the encoder $f$ is perfectly aligned, while the second and third terms are minimized if and only if the user and item embeddings are perfectly uniform.
\item The hyperparameter $\tau$ controls the trade-off between alignment and uniformity.
\end{itemize}

All of this implies that the AU loss implicitly captures the effect of normalized InfoNCE as \(M \to \infty\), thereby achieving the same objective without explicit negative sampling.

\subsection{Limitations of BPR and InfoNCE as Sampling-Based Losses}
\label{M}

Beyond their correlations, BPR and InfoNCE rely on a limited set of negatives from non-interacted items, while AU optimizes the representation geometry over the full item distribution without explicit negatives. Notably, InfoNCE reduces to BPR with one negative and converges to AU as the number of negatives grows, motivating an analysis of the finite-negative approximation error relative to its full-distribution limit.

\begin{theorem}[Approximation Error of Finite-Negative InfoNCE Relative to Its Full-Distribution Limit] 

For fixed $\tau > 0$, the error between the InfoNCE loss using $M$ negative samples and its full-distribution limit decays as:

\begin{align}
&\left| \lim_{M' \to \infty} \widetilde{L}_{\text{CL}}(f; \tau, M') - \widetilde{L}_{\text{CL}}(f; \tau, M) \right| \notag\\
&\leq \frac{1}{M} e^{2/\tau} + O\left(M^{-1/2}\right),
\end{align}
\end{theorem}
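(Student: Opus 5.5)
We bound the gap between the finite-$M$ normalized loss and its limit by writing both as expectations over the positive pair and controlling a log-mean-exponential. Fix $(u,i)\sim p_{\text{pos}}$ and write $X_n = e^{s(u,j_n)/\tau}$ for $j_n \overset{\text{i.i.d.}}{\sim} p_{\text{item}}$. Let $\mu_u = \mathbb{E}_{j\sim p_{\text{item}}}[e^{s(u,j)/\tau}]$ and $\bar X_M = \frac1M\sum_{n=1}^M X_n$.

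Because embeddings are normalized, $s(\cdot,\cdot)\in[-1,1]$, so every $X_n$ and $\mu_u$ lie in $[e^{-1/\tau}, e^{1/\tau}]$. Following the first steps of the proof of the preceding theorem, the normalized loss splits as
\begin{align*}
\widetilde{L}_{\text{CL}}(f;\tau,M) = -\tfrac{1}{\tau}\,\mathbb{E}[s(u,i)] + \mathbb{E}\log\!\Big(\tfrac1M e^{s(u,i)/\tau} + \bar X_M\Big),
\end{align*}
and its limit equals $-\tfrac1\tau\mathbb{E}[s(u,i)] + \mathbb{E}_u\log\mu_u$. The alignment terms cancel. It therefore suffices to bound, uniformly in $(u,i)$,
\begin{align*}
\Big|\,\mathbb{E}\log\!\Big(\tfrac1M e^{s(u,i)/\tau}+\bar X_M\Big) - \log \mu_u\Big|.
\end{align*}

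We split this by the triangle inequality into a positive-term part and a sampling part:
\begin{align*}
\Big|\log\!\Big(\tfrac1M e^{s(u,i)/\tau}+\bar X_M\Big)-\log\bar X_M\Big| \quad\text{and}\quad \big|\mathbb{E}\log\bar X_M-\log\mu_u\big|.
\end{align*}

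For the positive-term part, apply $\log(1+x)\le x$ to get the bound $\frac{e^{s(u,i)/\tau}}{M\bar X_M}\le \frac{e^{1/\tau}}{M e^{-1/\tau}}=\frac1M e^{2/\tau}$. This produces exactly the explicit leading term of the statement.

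For the sampling part, $\log$ is Lipschitz on $[e^{-1/\tau},e^{1/\tau}]$ with constant $e^{1/\tau}$. Hence
\begin{align*}
|\mathbb{E}\log\bar X_M-\log\mu_u|\le e^{1/\tau}\,\mathbb{E}|\bar X_M-\mu_u|\le e^{1/\tau}\sqrt{\mathrm{Var}(X_1)/M}.
\end{align*}
Since $\mathrm{Var}(X_1)\le (e^{1/\tau}-e^{-1/\tau})^2/4$, the last step uses Jensen together with the variance of an i.i.d. mean, giving the $O(M^{-1/2})$ term with a $\tau$-dependent constant. Alternatively, Hoeffding's inequality could be used, but the variance bound suffices.

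Averaging both bounds over $(u,i)\sim p_{\text{pos}}$ preserves them, since they hold uniformly, and gives the claim.

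The main obstacle is how the limit object is interpreted. The preceding theorem ends with an approximate ($\approx$) replacement by uniformity terms. I would state the comparison against the exact limit $-\frac1\tau\mathbb{E}[s]+\mathbb{E}_u\log\mu_u$, which is what the SLLN step of that proof yields, and not against the AU surrogate, whose additional approximation error is not controlled by $M$. The boundedness of $s$ coming from normalization is the key fact that makes both constants explicit.
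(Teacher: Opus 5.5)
Your proof is correct and shares the paper's skeleton. Both arguments cancel the alignment term, compare against the exact limit $\log\mathbb{E}_{j}[e^{s(u,j)/\tau}]$, split off the positive-pair term from the sampling fluctuation, and use $s\in[-1,1]$ to make $\log$ Lipschitz with constant $e^{1/\tau}$. The paper also uses the exact limit in Equation~\eqref{13}, so your reading agrees with it; your $\mathbb{E}_u$ should be taken over the user marginal of $p_{\text{pos}}$ for the cancellation to be exact.

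The order of steps differs. The paper moves the absolute value inside, linearizes the whole log-difference by the mean value theorem, and only then applies the triangle inequality on the linear scale, so its $e^{2/\tau}/M$ arises as $e^{1/\tau}\cdot e^{1/\tau}/M$. You split on the log scale and use $\log(1+x)\le x$, and reach the same constant.

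The substantive difference is the $O(M^{-1/2})$ step. The paper cites the Berry--Esseen theorem, which bounds a Kolmogorov distance to a Gaussian rather than $\mathbb{E}|\bar X_M-\mu_u|$. In Equation~\eqref{16} it also drops the absolute value, so that term as printed is identically zero because $\mathbb{E}\bar X_M=\mu_u$. Your bound $\mathbb{E}|\bar X_M-\mu_u|\le\sqrt{\mathrm{Var}(X_1)/M}$, with Popoviciu's inequality, is the right elementary tool and gives the explicit constant $e^{1/\tau}\sinh(1/\tau)$.

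One remark that neither argument needs but is worth knowing: both discard a sign cancellation. By Jensen, $\mathbb{E}\log\bar X_M-\log\mu_u\le 0$. A second-order Taylor expansion around $\mu_u$, whose first-order term has mean zero, bounds its magnitude by $e^{2/\tau}\mathrm{Var}(X_1)/(2M)$. So the true gap is $O(1/M)$. The $M^{-1/2}$ rate, and the paper's claim that it dominates, come only from taking absolute values too early.
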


\begin{proof}
\vspace{-0.5em} 

\begin{align}
\label{12}
&\left| \lim_{M' \to \infty} (L_{\text{CL}}(f;\tau,M') - \log M') - \left( L_{\text{CL}}(f;\tau,M) - \log M \right) \right|\\
\label{13}
&= \left| \underset{\substack{(u,i) \sim p_{\text{pos}} \\ \{j_n\}_{n=1}^{M} \overset{\text{i.i.d.}}{\sim} p_{\text{item}}}}{\mathbb{E}} \left[ \log \underset{\substack{j \sim p_{\text{item}}}}{\mathbb{E}} \left[ e^{s(u,j)/\tau}\right] - \log(\frac{1}{M} e^{s(u,i)/\tau}+\frac{1}{M} \sum_{n=1}^{M} e^{\text{s}(u,j_n)/\tau} ) \right] \right| \\
\label{14}
&\leq \underset{\substack{(u,i) \sim p_{\text{pos}} \\ \{j_n\}_{n=1}^{M} \overset{\text{i.i.d.}}{\sim} p_{\text{item}}}}{\mathbb{E}} \left[ \left| \left[ \log \underset{\substack{j \sim p_{\text{item}}}}{\mathbb{E}}  e^{s(u,j)/\tau}\right] - \log(\frac{1}{M}e^{s(u,i)/\tau}+\frac{1}{M} \sum_{n=1}^{M} e^{\text{s}(u,j_n)/\tau} ) \right| \right]\\
\label{15}
&\leq e^{1/\tau} \underset{\substack{(u,i) \sim p_{\text{pos}} \\ \{j_n\}_{n=1}^{M} \overset{\text{i.i.d.}}{\sim} p_{\text{item}}}}{\mathbb{E}} \left[ \left| \underset{\substack{j \sim p_{\text{item}}}}{\mathbb{E}}  e^{s(u,j)/\tau} - (\frac{1}{M}e^{s(u,i)/\tau}+\frac{1}{M} \sum_{n=1}^{M} e^{\text{s}(u,j_n)/\tau} ) \right| \right] \\
\label{16}
&\leq \frac{1}{M} e^{2/\tau} + e^{1/\tau} \underset{u,\{j_n\}_{n=1}^{M} \overset{\text{i.i.d.}}{\sim} p_{\text{item}}}{\mathbb{E}} \left[ \underset{\substack{j \sim p_{\text{item}}}}{\mathbb{E}} [ e^{s(u,j)/\tau} ] - \frac{1}{M} \sum_{n=1}^{M} e^{\text{s}(u,j_n)/\tau} \right] \\
\label{17}
&= \frac{1}{M} e^{2/\tau} + O\left(M^{-1/2}\right),
\end{align}

The inequality from Equation \eqref{13} to Equation \eqref{14} follows from Jensen's inequality applied to the absolute value function. The inequality from Equation \eqref{14} to Equation \eqref{15} follows the Intermediate Value Theorem and the $e^{1/\tau}$ upper bound on the absolute derivative of log between the two points. For Equation \eqref{15}, we apply the triangle inequality to separate the positive sample term, as shown in Equation \eqref{16}. Applying the Berry-Esseen theorem to the i.i.d. negative sample terms with bounded support, we establish that the mean estimation error scales as $O(M^{-1/2})$.
\end{proof}

By analyzing the results, we can draw the following conclusions:
\begin{itemize}[leftmargin=*, nosep]
\item \textbf{BPR vs. InfoNCE:} BPR underperforms InfoNCE with larger $M$ because it corresponds to the extreme case of $M=1$, which incurs the largest approximation error. 

\item \textbf{Negative Sampling Error Bottleneck:} The approximation error has two components, $O(M^{-1})$ from positive-pair normalization and $O(M^{-1/2})$ from finite negative sampling. While the former vanishes quickly, the latter decays more slowly and dominates the total error, limiting how closely the finite-sample InfoNCE loss can approach its full-distribution limit.

\item \textbf{InfoNCE vs. AU:} In the sparse interaction scenarios, where negative sampling suffers from false negative labeling and a narrow view of the negative space, the AU objective is more robust, as it eliminates the need for negative sampling and directly optimizes alignment and uniformity over the global representation space. 
\end{itemize}

Unlike BPR and InfoNCE, which rely on negative sampling, the AU objective directly optimizes the global geometry of the representation space, making it more robust. We adopt AU as our base loss to reduce sampling-induced biases and improve generalization in sparse and large-scale recommendation scenarios.

\section{Theoretical Analysis of Semantic-Aware Alignment Enhancement}

In the previous section, we discussed the relationship between AU and InfoNCE/BPR, and showed that AU avoids explicit negative sampling by directly optimizing alignment and uniformity. However, under sparse implicit feedback, the alignment supervision in AU is still limited to observed user-item interactions. To address this limitation, we further analyze how SFR decomposes item embeddings into semantic factors for semantic relation modeling, while SPA enriches the alignment supervision by augmenting supervised signals.

\subsection{Theoretical Analysis of Semantic Factor Routing}
\begin{theorem}[Semantic Factor Routing for Soft Decomposition]
Our semantic factor routing algorithm optimizes the following objective function. After convergence, the learned semantic factors can be viewed
as a soft semantic decomposition of the input item embeddings.

\begin{equation}
\begin{aligned}
    \min_{W,\mathcal{F}}  \mathcal{L}(W,\mathcal{F}) &:= - \sum_j \langle z_{j}, \mathcal{F}_j \rangle + \alpha \sum_i \sum_j w_{ij} \log w_{ij}, \\
    \text{s.t.} \quad & \sum_{j} w_{ij} = 1, \, w_{ij} > 0, \, \|\mathcal{F}_j\| \leq 1
\label{18}
\end{aligned}
\end{equation}
\end{theorem}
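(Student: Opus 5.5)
The plan is to read Algorithm~\ref{alg:SFR} as an alternating (block-coordinate) minimization of $\mathcal{L}(W,\mathcal{F})$ in \eqref{18}. The first step is to substitute $z_j=\sum_{i\in\mathcal{I}_b} w_{ij}\tilde{f(i)}$, which makes the coupling explicit:
\begin{equation*}
\mathcal{L}(W,\mathcal{F}) = -\sum_{i}\sum_{j} w_{ij}\,\mathcal{F}_j^\top \tilde{f(i)} + \alpha\sum_i\sum_j w_{ij}\log w_{ij}.
\end{equation*}
In this form $\mathcal{L}$ is linear in $\mathcal{F}$ for fixed $W$. For fixed $\mathcal{F}$ it is a linear term plus $\alpha$ times the negative entropy, which is strictly convex on the product of simplices. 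I would treat the two blocks separately and then match each to a line of the algorithm.

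\emph{$\mathcal{F}$-step.} Fix $W$. Minimizing $-\langle z_j,\mathcal{F}_j\rangle$ over the ball $\|\mathcal{F}_j\|\le 1$ is solved by Cauchy--Schwarz: the minimizer is $\mathcal{F}_j=z_j/\|z_j\|$ whenever $z_j\neq 0$. This is exactly the aggregation and normalization lines of Algorithm~\ref{alg:SFR}. The minimal value is $-\|z_j\|$.

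\emph{$W$-step.} Fix $\mathcal{F}$. I would form the Lagrangian with one multiplier $\lambda_i$ per constraint $\sum_j w_{ij}=1$. The positivity constraint is inactive because the entropy has infinite slope at $0$. Stationarity gives $-\mathcal{F}_j^\top\tilde{f(i)}+\alpha(1+\log w_{ij})+\lambda_i=0$, hence $w_{ij}=\mathrm{softmax}_j\bigl(\mathcal{F}_j^\top\tilde{f(i)}/\alpha\bigr)$. By strict convexity this is the unique minimizer.

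The algorithm, however, uses accumulated logits $b_{ij}\leftarrow b_{ij}+\mathcal{F}_j^\top\tilde{f(i)}$ rather than recomputing them from scratch. To cover this, I would rewrite the $W$-update as an entropic mirror-descent (exponentiated-gradient) step. The step minimizes the linear part of $\mathcal{L}$ plus an entropic proximal term, namely $\alpha\,\mathrm{KL}(w_i\,\|\,w_i^{(t)})$ toward the previous iterate. Its closed form is $w^{(t+1)}_{ij}\propto w^{(t)}_{ij}\exp(\mathcal{F}_j^\top\tilde{f(i)}/\alpha)$, which for $\alpha=1$ is precisely the softmax of the accumulated $b_{ij}$. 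The random initialization $b_{ij}\sim\mathcal{N}(0,\sigma^2)$ only fixes $w^{(0)}$ near uniform.

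\emph{Convergence.} Each $\mathcal{F}$-step is an exact minimization, and each $W$-step is a proximal step on a convex function, so $\mathcal{L}$ is nonincreasing along the iterations. It is also bounded below: the linear term is at least $-|\mathcal{I}_b|$ because all vectors are unit-norm, and the entropy term is at least $-\alpha|\mathcal{I}_b|\log K$. Hence the objective values converge. Any limit point $(W^*,\mathcal{F}^*)$ satisfies both block optimality conditions, i.e., the KKT conditions of \eqref{18}. At such a point each $\mathcal{F}^*_j$ is the normalized $W^*$-weighted mean of the items, and each row $w^*_i$ is a probability distribution over factors. This is the claimed soft decomposition: each item is explained by a convex mixture of factors, and each factor is the direction of its softly assigned items, as in a spherical soft $k$-means.

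The main obstacle is the mismatch between the mirror-descent $W$-step and exact minimization of \eqref{18}. The KL proximal term is anchored at the previous iterate rather than being the fixed entropy term, so the logits keep growing and $W$ may sharpen toward hard assignments instead of settling at the entropy-regularized fixed point. I would first try a standard mirror-descent descent-lemma argument, using that the linear part's gradient $\mathcal{F}_j^\top\tilde{f(i)}$ is bounded by $1$. That would show the iterates approach the stationary set, with $\alpha$ acting as an effective inverse step size. If that fails, I would restrict the claim to the exact-minimization variant, where logits are reset each round, and present the accumulated update as its proximal relaxation.
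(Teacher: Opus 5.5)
Your block computations are the same as the paper's proof. You substitute $z_j=\sum_i w_{ij}\tilde f(i)$, obtain $\mathcal{F}_j=z_j/\|z_j\|$ from Cauchy--Schwarz and $w_{ij}\propto\exp(\mathcal{F}_j^\top\tilde f(i)/\alpha)$ from the Lagrangian, and conclude from monotone decrease plus a lower bound. The paper simply declares this softmax to be the routing step of Algorithm 1. In effect it analyzes exact two-block alternating minimization and silently ignores that the algorithm accumulates logits. Your explicit lower bound, the $z_j\neq 0$ caveat, and the limit-point form of stationarity are more careful than its ``converges to a stationary point.''

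Where you go beyond the paper, there is a false step. You assert that the objective is nonincreasing because each $W$-step is a proximal step on a convex function. But your KL-proximal step acts on the linear part alone, so it only decreases $-\sum_j\langle z_j,\mathcal{F}_j\rangle$, not the full objective. Concretely, fix $g_{ij}=\mathcal{F}_j^\top\tilde f(i)$ and take $w_i=\mathrm{softmax}(g_i/\alpha)$, which is already the unique minimizer over $W$. Your update $w'_{ij}\propto w_{ij}e^{g_{ij}/\alpha}$ then returns $\mathrm{softmax}(2g_i/\alpha)$. This strictly increases the objective unless $g_i$ is constant in $j$. In particular, for $\alpha=1$ a non-degenerate KKT point of the objective is not even a fixed point of Algorithm 1.

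For the same reason, your planned descent-lemma rescue cannot succeed. Unrolling the accumulation gives $w^{(t+1)}_i=\mathrm{softmax}\bigl(b^{(0)}_i+\sum_{s\le t}g^{(s)}_i\bigr)$, which is a follow-the-regularized-leader step. It minimizes the time-averaged linear term plus the entropy term with effective weight $1/t$. So the rows sharpen toward one-hot assignments, and the iteration behaves like annealed spherical $k$-means rather than converging to stationary points for any fixed $\alpha>0$.

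The defensible statement is your fallback, which is exactly the content of the paper's proof: exact alternating minimization with logits recomputed each round as $b_{ij}=\mathcal{F}_j^\top\tilde f(i)/\alpha$. Algorithm 1, run for a small number $r$ of rounds from near-uniform $b^{(0)}$, should then be presented only as a heuristic relaxation of it. Drop the monotonicity claim for the accumulated update and state this restriction explicitly.
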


\noindent \textbf{Intuition behind the objective}. Equation~\eqref{18} provides an optimization interpretation of Algorithm~\ref{alg:SFR}. The term $-\sum_j \langle z_j, \mathcal{F}_j \rangle$ encourages routed item embeddings to align with their corresponding semantic factors. Accordingly, the routing weights $w_{ij}$ are updated through a softmax operation over factor-item similarities in Algorithm~\ref{alg:SFR}. The regularization term $\alpha \sum_i \sum_j w_{ij}\log w_{ij}$ acts as an entropy regularizer, which prevents one-hot routing assignments and leads to softer routing distributions across semantic factors.

\begin{proof}
We optimize Equation~\eqref{18} by alternating minimization over
the routing weights $W$ and semantic factors $\{\mathcal F_j\}$.

\noindent\textbf{(1) Optimization with respect to $W$.} Fixing $\{\mathcal F_j\}$ and substituting $z_j=\sum_i w_{ij}\tilde f(i)$, the objective becomes
\begin{equation*}
\min_W -\sum_i\sum_j w_{ij} \langle \tilde f(i), \mathcal F_j \rangle + \alpha\sum_i\sum_j w_{ij}\log w_{ij},
\end{equation*}
subject to
\begin{equation*}
\sum_j w_{ij}=1, \qquad w_{ij}>0.
\end{equation*}

Introducing the Lagrange multiplier $\lambda_i$ for the simplex constraint yields
\begin{equation*}
-\sum_j w_{ij}\langle \tilde f(i),\mathcal F_j\rangle + \alpha\sum_j w_{ij}\log w_{ij} + \lambda_i \left( \sum_j w_{ij} - 1 \right).
\end{equation*}

Taking derivatives with respect to $w_{ij}$ gives
\begin{equation*}
-\langle \tilde f(i),\mathcal F_j\rangle + \alpha(\log w_{ij}+1) + \lambda_i = 0.
\end{equation*}

Solving for $w_{ij}$ yields
\begin{equation*}
w_{ij} \propto \exp \left( \frac{ \langle \tilde f(i),\mathcal F_j\rangle }{\alpha} \right),
\end{equation*}
which corresponds to a softmax routing distribution over semantic factors.

\noindent \textbf{(2) Optimization with respect to $\mathcal F_j$.} Fixing $W$, the optimization problem reduces to
\begin{equation*}
\max_{\|\mathcal F_j\|\le1}
\langle z_j,\mathcal F_j\rangle.
\end{equation*}

By the Cauchy--Schwarz inequality,
\begin{equation*}
\langle z_j,\mathcal F_j\rangle \le \|z_j\|\|\mathcal F_j\| \le \|z_j\|,
\end{equation*}
where equality holds when
\begin{equation*}
\mathcal F_j = \frac{z_j}{\|z_j\|}.
\end{equation*}

Therefore, each semantic factor corresponds to the normalized direction of the weighted aggregation of routed item embeddings.

\noindent \textbf{(3) Convergence.} Each step minimizes Equation~\eqref{18} with one variable block fixed, thus the objective monotonically decreases during optimization. Since the objective is lower bounded, the optimization converges to a stationary point, yielding a soft-routing decomposition of the input item embeddings.
\end{proof}

\subsection{Generalization Benefit of Positive Augmentation}

\begin{theorem}[Generalization Benefit of Positive Augmentation]

Recall that the alignment objective is defined as:
\begin{equation*}
L_{\text{align}}=\underset{(u,i)\sim p_{\text{pos}}}{\mathbb E}\left\|\tilde f(u)-\tilde f(i)\right\|^2.
\end{equation*}
Given $N$ positive pairs
$\{(u_n,i_n)\}_{n=1}^N$
sampled i.i.d. from $p_{\text{pos}}(u,i)$, the empirical alignment loss is:
\begin{equation*}
\hat L_{\text{align}}=\frac1N\sum_{n=1}^N\|\tilde f(u_n)-\tilde f(i_n)\|^2.
\end{equation*}
Then, with probability $\ge1-\delta$, the estimation error is bounded by:
\begin{equation}
\left|L_{\text{align}}-\hat L_{\text{align}}\right|\le\mathcal{O}(N^{-1/2}).
\end{equation}
\end{theorem}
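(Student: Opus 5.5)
The bound follows from Hoeffding's inequality once we observe that each summand in $\hat L_{\text{align}}$ is bounded. Because the encoder outputs are normalized, $\|\tilde f(u)\|=\|\tilde f(i)\|=1$. For each pair we then have
\begin{equation*}
X_n := \|\tilde f(u_n)-\tilde f(i_n)\|^2 = 2-2\,\tilde f(u_n)^\top \tilde f(i_n) \in [0,4].
\end{equation*}
Since the pairs $(u_n,i_n)$ are drawn i.i.d.\ from $p_{\text{pos}}$ and the encoder $f$ is held fixed, the $X_n$ are i.i.d.\ random variables with mean $\mathbb{E}[X_n]=L_{\text{align}}$. Moreover, $\hat L_{\text{align}}=\frac1N\sum_n X_n$ is exactly their empirical mean.

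Hoeffding's inequality for variables supported on an interval of length $4$ gives
\begin{equation*}
\Pr\left(\left|\hat L_{\text{align}}-L_{\text{align}}\right|\ge t\right)\le 2\exp\left(-\frac{2Nt^2}{16}\right).
\end{equation*}
Setting the right-hand side equal to $\delta$ and solving for $t$ yields, with probability at least $1-\delta$,
\begin{equation*}
\left|L_{\text{align}}-\hat L_{\text{align}}\right|\le 4\sqrt{\frac{\log(2/\delta)}{2N}}.
\end{equation*}
For fixed $\delta$ this is $\mathcal{O}(N^{-1/2})$, which is the claim. A Chebyshev argument would also give the rate, since the variance of $X_n$ is at most $4$, but it carries a worse $1/\sqrt{\delta}$ dependence. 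We therefore use Hoeffding.

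To connect this with positive augmentation, which is the section's theme, we note a consequence. SFM adds the pairs $(u,i')$ with $i'\in\mathcal N_i$, and these enlarge the effective number of alignment samples from $N$ to roughly $N+\sum_n|\mathcal N_{i_n}|$. The same bound then shrinks accordingly. The caveat is that the augmented pairs follow a shifted distribution and are not independent of one another. This remark is heuristic and lies outside the stated theorem.

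The main subtlety is uniformity over the learned encoder. The bound above holds only for a fixed $f$. Once $f$ is trained on the same samples, the $X_n$ are no longer independent of $f$. If a bound uniform over the hypothesis class were required, we would try a Rademacher-complexity argument instead. The function $(u,i)\mapsto 2-2\tilde f(u)^\top\tilde f(i)$ is bounded and Lipschitz in the normalized embeddings, and with a contraction argument this would again give an $\mathcal{O}(N^{-1/2})$ rate, up to class-dependent constants. Since the statement fixes $f$, we present the Hoeffding argument as the proof and note this extension only as a remark.
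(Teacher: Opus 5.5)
Your proof is correct and matches the paper's argument step for step: you use normalization to get $X_n\in[0,4]$, apply Hoeffding with range $4$, and invert to obtain $t=4\sqrt{\ln(2/\delta)/(2N)}=\mathcal{O}(N^{-1/2})$. Your caveats are sound and more careful than the paper's closing sentence tying the bound to SPA: the bound holds only for a fixed $f$, and the augmented pairs are neither i.i.d.\ nor drawn from $p_{\text{pos}}$.
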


\begin{proof}
Define
\begin{equation*}
X_n=\|\tilde{f(u_n)}-\tilde{f(i_n)}\|^2.
\end{equation*}

The empirical alignment loss can be written as:
\begin{equation*}
\hat L_{\text{align}}=\frac{1}{N}\sum_{n=1}^{N} X_n,
\end{equation*}
while the population alignment loss is:
\begin{equation*}
L_{\text{align}}=\mathbb{E}[X].
\end{equation*}

Since embeddings are normalized, we have $X_n \in [0,4]$. Applying Hoeffding's inequality, the probability that the empirical alignment loss deviates from the population alignment loss by more than $t$ is bounded by:
\begin{equation*}
\Pr\left(\left|\hat L_{\text{align}} - L_{\text{align}}
\right|\ge t\right)\le2\exp\left(-\frac{2Nt^2}{16}\right).
\end{equation*}

Equivalently, for any failure probability $\delta > 0$, solving
\begin{equation*}
2\exp\left(-\frac{2Nt^2}{16}\right)=\delta,
\end{equation*}
gives
\begin{equation*}
t=4\sqrt{\frac{\ln(2/\delta)}{2N}}.
\end{equation*}

Therefore, with probability at least $1-\delta$, we obtain:
\begin{equation*}
\left|\hat L_{\text{align}}-L_{\text{align}}\right|\le4\sqrt{\frac{\ln(2/\delta)}{2N}}=\mathcal{O}(N^{-1/2}).
\end{equation*}

The estimation error decreases at rate $\mathcal{O}(N^{-1/2})$ as the number of positive pairs increases. Thus, SPA improves alignment optimization by augmenting additional semantic-aware positive pairs.
\end{proof}

\section{Experimental Setup}

\subsection{Baselines} \label{baselines}
\begin{itemize}[leftmargin=*, nosep]
\item \textbf{BPR-MF} \cite{BPR} is a learning-to-rank method that improves a model's ability to rank preferred items above negative ones.
\item \textbf{DirectAU} \cite{DirectAU} proposes a new learning objective for CF that directly optimizes the alignment and uniformity of user and item representations on a hypersphere. 
\item \textbf{LightGODE} \cite{LightGODE} replaces the computationally expensive GCNs during training with an efficient post-training graph ODE.
\item \textbf{LightCCF} \cite{LightCCF} reveals that the contrastive learning objective inherently performs GCN through its gradient descent process.
\item \textbf{LightGCN} \cite{LightGCN} is a simplified GCN for CF that performs linear propagation between neighbors on the user-item bipartite graph.
\item \textbf{GraphAU} \cite{GraphAU} is a graph-based method that improves hypersphere representation learning for RecSys by using high-order connections in the user-item graph to alleviate data sparsity.
\item \textbf{FourierKAN-GCF} \cite{FourierKAN-GCF} introduces FourierKAN as an efficient feature transformation module in graph collaborative filtering, thereby enhancing the model’s representation capability while reducing training difficulty.
\item \textbf{SimGCF} \cite{KDD26} enhances GCNs by adaptively scaling low- and high-frequency graph signals and applying a space flip operation to recover signal expressiveness.
\end{itemize}

\begin{table}[t]
\centering
\caption{Training time comparison of GCN-based and MF-based methods on three datasets, including the average epoch time, number of epochs, and total training time. Here, s, m, and h denote seconds, minutes, and hours, respectively.}
\label{tab:training_time_comparison}
\small
\setlength{\tabcolsep}{4pt}
\begin{tabular}{@{}lccccc@{}}
\toprule
Dataset & Method & Time/Epoch & Epochs & Total Time \\
\midrule
\multirow{6}{*}[-3pt]{\centering\footnotesize Gowalla}
 & LightGCN & 112.32s & 113 & 3.53h \\
 & FourierKAN-GCF & 126.58s & 91 & 3.20h \\
 & SimGCF & 245.29s & 87 & 5.93h \\
 & LightGODE & 14.67s & 94 & 0.38h \\
 & LightCCF & 14.05s & 88 & 0.34h \\
 & SaFeAU & 20.87s & 58 &  0.34h\\
\midrule
\multirow{6}{*}[-3pt]{\centering\footnotesize Toys-and-Games}
 & LightGCN & 1.82s & 134 & 4.065m  \\
 & FourierKAN-GCF & 2.56s & 122 & 5.205m \\
 & SimGCF & 2.81s & 118 & 5.526m \\
 & LightGODE & 0.52s & 130 & 1.13m \\
 & LightCCF & 0.54s & 164 & 1.48m \\
 & SaFeAU & 0.73s & 98 & 1.19m \\
\midrule
\multirow{6}{*}[-3pt]{\centering\footnotesize Yelp2018}
 & LightGCN & 49.52s & 59 & 48.70m \\
 & FourierKAN-GCF & 50.86s & 49 & 41.54m \\
 & SimGCF & 64.18s & 42 & 44.93m \\
 & LightGODE & 1.73s & 113 & 3.26m \\
 & LightCCF & 1.88s & 109 & 3.42m \\
 & SaFeAU & 2.32s & 84 & 3.25m \\
\bottomrule
\end{tabular}
\end{table}

\subsection{Efficiency Analysis} 
\label{efficiency}

We evaluate the computational efficiency of SaFeAU by comparing its training time with state-of-the-art methods, including traditional GCN-based methods (LightGCN, FourierKAN-GCF, SimGCF) and efficient MF-based methods (LightGODE, LightCCF). Results on three representative datasets are summarized in Table \ref{tab:training_time_comparison}, from which several key observations emerge:

\begin{itemize}[leftmargin=*, nosep]
\item MF-based models (LightGODE, LightCCF, SaFeAU) achieve significantly faster training than GCN-based models (LightGCN, SimGCF) by avoiding costly neighborhood aggregation.

\item SaFeAU maintains high training efficiency as data complexity grows, with only a modest increase in per-epoch runtime, making it suitable for large-scale industrial RecSys.

\item SaFeAU also converges in fewer epochs, indicating that its semantic positive pairs alignment loss captures user preferences more effectively and achieves peak performance in fewer training iterations.
\end{itemize}
\end{document}